  \providecommand\BibTeX{{%
    \normalfont B\kern-0.5em{\scshape i\kern-0.25em b}\kern-0.8em\TeX}}}
\newlength\myindent
\newtheorem{theorem}{Theorem}
\newtheorem{lemma}{Lemma}
\def\BibTeX{{\rm B\kern-.05em{\sc i\kern-.025em b}\kern-.08em T\kern-.1667em\lower.7ex\hbox{E}\kern-.125emX}}
\tikzstyle{int}=[draw, fill=lightgray, minimum height=1cm, minimum width=1.4cm,text width=1.3cm,align =center]
\theoremstyle{definition}
\begin{document}
\fancyhead{}
\title{Eliminating Sandwich Attacks with the Help of Game Theory}

\author{Lioba Heimbach}
\affiliation{%
 \institution{ETH Zürich}
 \country{Switzerland}}
\email{hlioba@ethz.ch}
\author{Roger Wattenhofer}
\affiliation{%
 \institution{ETH Zürich}
 \country{Switzerland}}
\email{wattenhofer@ethz.ch}


\begin{abstract}
    Predatory trading bots lurking in Ethereum's mempool present invisible taxation of traders on automated market makers (AMMs). AMM traders specify a slippage tolerance to indicate the maximum price movement they are willing to accept. This way, traders avoid automatic transaction failure in case of small price movements before their trade request executes. However, while a too-small slippage tolerance may lead to trade failures, a too-large slippage tolerance allows predatory trading bots to profit from sandwich attacks. These bots can extract the difference between the slippage tolerance and the actual price movement as profit.
    
    In this work, we introduce the \emph{sandwich game} to analyze sandwich attacks analytically from both the attacker and victim perspectives. Moreover, we provide a simple and highly effective algorithm that traders can use to set the slippage tolerance. We unveil that most broadcasted transactions can avoid sandwich attacks while simultaneously only experiencing a low risk of transaction failure. Thereby, we demonstrate that a constant auto-slippage cannot adjust to varying trade sizes and pool characteristics. Our algorithm outperforms the constant auto-slippage suggested by the biggest AMM, Uniswap, in all performed tests. Specifically, our algorithm repeatedly demonstrates a cost reduction exceeding a factor of 100.
\end{abstract}
\begin{CCSXML}
<ccs2012>
<concept>
<concept_id>10002978.10003006.10003013</concept_id>
<concept_desc>Security and privacy~Distributed systems security</concept_desc>
<concept_significance>500</concept_significance>
</concept>
<concept>
<concept_id>10003752.10010070.10010099.10010100</concept_id>
<concept_desc>Theory of computation~Algorithmic game theory</concept_desc>
<concept_significance>500</concept_significance>
</concept>
</ccs2012>
\end{CCSXML}
\ccsdesc[500]{Security and privacy~Distributed systems security}
\ccsdesc[500]{Theory of computation~Algorithmic game theory}

\keywords{blockchain, Ethereum, smart contract, decentralized finance, front-running, sandwich attack}


\maketitle

\section{Introduction}

In 2008, Nakatomo~\cite{nakamoto2008bitcoin} introduced Bitcoin, a fully decentralized currency based on cryptography. The introduction of smart contracts~\cite{wood2014ethereum} further fueled the initial excitement surrounding cryptocurrencies. Yet, apart from a few niche applications, cryptocurrencies mostly were alternative investment vehicles: you invest a dollar today and hope to have hundreds of dollars tomorrow. The emergence of \emph{decentralized finance (DeFi)} set off a new wave of interest for cryptocurrencies. DeFi is the first widespread application of cryptocurrencies and utilizes smart contracts running on a blockchain, currently mainly Ethereum~\cite{wood2014ethereum}, to offer financial services without relying on intermediaries. Traditional finance, on the other hand, relies on financial intermediaries, such as banks, brokerages, and exchanges. Thus, traditional finance requires users to trust intermediaries with their assets, while in DeFi, users have full autonomy over their assets. 

\emph{Decentralized exchanges (DEXes)} are a DeFi cornerstone. While centralized exchanges (CEXes) traditionally utilize the limit order book mechanism, matching individual sellers and buyers, traders do not need to be matched to a trading partner with opposite intentions in DEXes. Instead, trades on DEXes execute immediately upon inclusion in a block. When swapping two cryptocurrencies, the fluid exchange rate is determined algorithmically. Generally, the ratio and amount of the cryptocurrency pair, stored in the respective smart contract otherwise referred to as \emph{liquidity pool}, control the exchange rate. 

We observe users starting to acknowledge the benefits of DEXs. The 24hr trading volume of Uniswap~\cite{2021uniswap}, the most popular DEX, topped Coinbase's 24hr trading volume for the first time on 30 August 2020~\cite{uniswap_coinbase}, and repeatedly since. Further, all DEXes, which also include SushiSwap~\cite{2021sushiswap}, Curve~\cite{2021curve}, and dxdy~\cite{2021dxdy}, have more than \$27 billion locked as of 10 November 2021~\cite{defipulse}. 

Despite the undeniable rise in popularity of DeFi applications, the Ethereum peer-to-peer network is recently being characterized as a dark forest, with user transactions broadcast through the network being prey to predatory trading bots. The reason: the rise of DeFi on the Ethereum blockchain is testing some blockchain design principles. DeFi's smart contracts are dependent on transaction-ordering. One example of an attack that exploits the dependency on transaction-ordering are the omnipresent \emph{sandwich attacks} on DEX transactions. Sandwich attacks involve front- and back-running a  victim transaction -- presenting a tax on the victim's trade by forcing the trade to execute at an unfavorable price and then taking advantage of the created price difference. More than 84,000 transactions were sandwiched on Uniswap in April 2021 alone~\cite{zust2021analyzing}. Between May 2020 and May 2021, sandwich attacks earned at least 64,217 ETH~\cite{zust2021analyzing} -- presenting an invisible tax on trades.

A year ago, the risk presented to bots performing sandwich attacks was to time the front- and back-running transactions shortly before and after the victim's transaction. However, with the recent widespread adoption of flashbots~\cite{2021flash} by miners, sandwich attacks have become simpler than ever. Set out to light up the dark forest, \emph{front-running-as-a-service}, as offered by flashbots, allows attackers to execute sandwich attacks on victim transactions in the mempool virtually risk-free. Flashbots allows anyone submit victim transactions from the mempool directly to the miner along with the attack -- guaranteeing a successful attack.  

Thus far, the largest DEXes have not reacted to the risks sandwich attacks present to their users. The interfaces of Uniswap and SushiSwap both auto-suggest a fixed \emph{slippage} tolerance, the maximum acceptable price movement, ignorant to all trade parameters. In this paper, we show that a fixed slippage tolerance is unable to perform well, i.e., prevent sandwich attacks and avoid unnecessary transaction failures due to natural price fluctuations, consistently. Further, both platforms only warn their users of the risk of being front-run when inputting an exorbitant slippage tolerance. Even more startling, when users choose slippage tolerances that we find to be sensible, the two platforms issue warnings. Thus, their suggestions and (missed) warnings ramp up both the loss of their users and the profits of attackers.

\subsection{Our Contributions}
Our contribution is two-fold: 
\begin{enumerate}
    \item We analyze sandwich attacks by introducing the sandwich game. The sandwich game formalizes the sandwich attack problem from both the trader's and the bot's perspectives. 
    \item We provide AMM traders with an algorithm, allowing them to circumvent both unnecessary trade failures and most sandwich attacks. In the evaluation, we show that our algorithm outperforms the auto-slippage suggested by the biggest AMMs, in some cases by a factor of 100 and more.
\end{enumerate}

\section{Background}

DeFi now offers many centralized finance services. DeFi financial services are smart contracts on the blockchain -- the Ethereum blockchain hosts most services. DEXes are one momentous DeFi innovation to surface in recent years. However, DEXes present new challenges to blockchain design. While an order's position in a block used to be inconsequential for simple financial transactions, a transaction's relative position in a block is essential for many successful attacks. This section covers the preliminaries of sandwich attacks on DEXes.

\subsection{Ethereum Blockchain}

Ethereum is a public blockchain platform and the home to most DeFi applications, including DEXes such as Uniswap and SushiSwap. Users send their transactions to the \emph{mempool}: the waiting area for Ethereum transactions. Along with each transaction, users indicate the \emph{gas fee} (Ethereum's network transaction fee) they are willing to pay for their transaction. Transactions execute upon the inclusion in a block by a miner.

Until recently, users were over-bidding each other for block inclusion, and miners received the entire gas fee, but this changed with Ethereum's London Hard Fork update on 5 August 2021~\cite{2021eip1559}. As part of the London Hard Fork, EIP-1559 launched and changed Ethereum transactions fees. EIP-1559 aims to make transaction fees predictable, dividing the fee into the base fee and the priority fee. Automatically set by the network according to the current network load, the base fee is required for block inclusion and burned by the protocol. The priority fee, on the other hand, is collected by the miners. Thus, with EIP-1559, anyone wishing to make an Ethereum transaction will at least pay the base fee. 

\subsection{Automated Market Maker}
Most DEXes are automated market makers (AMMs). AMMs allow automatic trading of cryptocurrencies by an algorithm. Cryptocurrencies are aggregated in liquidity pools to facilitate this automated trading. A widespread adaptation of the AMM mechanism is Uniswap V2's, which we will introduce in the following. Note, however, that there are currently two active Uniswap versions: Uniswap V2 and Uniswap V3. This paper discusses and applies to both versions.

Uniswap allows the creation of liquidity pools between any cryptocurrency pair. Then, individual liquidity providers can deposit both cryptocurrencies at equal value in the respective pool. The liquidity aggregation allows traders to exchange the respective tokens in the pool. A transaction fee is levied for every trade and distributed pro-rata amongst the pool's liquidity providers. 

The AMM smart contract specifies the exchange rate offered to trades based on the number of tokens reserved in the liquidity pool. Uniswap utilizes a constant product market maker (CPMM), ensuring that product between the amounts of the two reserved pool currencies stays constant. We consider a liquidity pool between token $X$ and token $Y$, $X\rightleftharpoons Y$. The respective reserves are $x_t$ and $y_t$ at time $t$. A trader wishing to exchange $\delta _x$ tokens $X$ at time $t$ will receive $\delta _y$ tokens $Y$, where
\begin{equation}
        \delta  _{y}= y_{t} - \frac{x_{t} \cdot y_{t}}{x_{t}+(1-f)\delta_x} = \frac{y_{t} (1-f )\delta_x}{x_{t}+(1-f)\delta_x}, \label{eq:exchange}
\end{equation}
and $f$ is the transaction fee~\cite{adams2020uniswap}. The fee is charged on the input amount and is 0.3\% in the case of Uniswap.

Trades, however, are not executed immediately upon submission but are first sent to the mempool. Upon inclusion in a block, the trade executes. The delay between submission and execution implies that the pool reserves during execution are unknown to the trader when submitting the swap. Thus, traders indicate their \emph{slippage} tolerance -- the maximum acceptable price movement. 

We note that at the time of this writing, the interfaces of most major DEXes~\cite{UniswapWeb,2021sushiswap,2021curve,2021balancer,2021pancakeswap} suggest an auto-slippage, i.e., the same slippage tolerance is suggested for all transactions, independent of the size and the pool.

\subsection{Sandwich Attacks}
The aforementioned slippage tolerance simultaneously gives rise to sandwich attacks: front- and back-running victim transactions. Predatory traders listen to transactions in the public mempool and attack those that present profit opportunities by manipulating the transaction ordering and ensuring that one of the attack's transactions executes before the victim's transaction (front-running) and one after the victim's transaction (back-running).

To understand how sandwich attacks present a profit opportunity to bots, consider a victim's transaction trading 10 tokens $X$ for tokens $Y$ with 1\% slippage tolerance and 0.3\% transaction fee in a pool holding 100 tokens $X$ and 100 tokens $Y$. The trader is expected to receive 9.066 tokens $Y$ (Equation~\ref{eq:exchange}). However, after seeing the transaction in the mempool, a trading bot front-runs the victim by purchasing 0.524 tokens $Y$ with 0.529 tokens $X$. Thereby, the bot raises the price of token $Y$ for the victim to the limit indicated by the slippage tolerance. The following victim's trade subsequently only buys 8.975 tokens $Y$. Thus, the victim's trade is executed at a higher price than expected -- receiving exactly 1\% fewer tokens $Y$ than anticipated. The price of $Y$ is further increased by the victim's transaction. Finally, the bot concludes the attack by selling 0.524 tokens $Y$ at a higher price and receiving 0.635 tokens $X$. We observe that the bot's profit from the sandwich attack is 0.106 tokens $X$. Generally, the profitability of sandwich attacks increases with the victim's transaction size and slippage tolerance. 

Before the introduction of flashbots, bots were challenged to time their attacks through strategically setting the \emph{gas price}, the reward given to the miner for processing the transaction, such that a miner ordering the transactions according to gas price would sandwich the victim's transactions with the bots attack. As most miners used to sort transactions according to gas price, bots were able to predict the order of transactions. However, there was no guarantee. Further, this resulted in high gas prices for bots as they were overbidding each other in what is commonly referred to as \emph{priority gas auctions} (PGAs) -- the competitive bidding up of gas fees to obtain early block positions. 

Since the adaptation of flashbots, it is now possible for bots to guarantee that their attack transactions will sandwich the victim's transaction. Miners place bundles received from flashbots at the beginning of the block. Bots can submit the entire sandwich attack with the correct ordering to the miner by including the victim's transaction, detected in the mempool, in the bundle.

\section{Model}
There are three types of players in the sandwich game: traders, predatory trading bots, and miners. Traders send a DEX transaction to the mempool, representing potential bait to predatory trading bots. Predatory trading bots listen to these incoming transactions and launch a sandwich attack if they consider it profitable: aiming to front- and back-run the trader's transaction. Finally, miners select and order the transactions from the mempool in a block. From this point on, we will assume that the predatory trading bots are miners themselves or collude with miners -- allowing them to strategically order their transactions around the trader's transaction at no extra cost.

\subsection{Transaction Model}
We consider the liquidity pool between token $X$ and token $Y$, $X\rightleftharpoons Y$, with respective reserves $x_{0}$ and $y_{0}$ at time ${t_0}$. The current base fee for a Uniswap transaction is denoted by $b$. Note that while the base fee gives the minimum fee per gas, we utilize the minimum fee per Uniswap transaction in our analysis. This simplification is reasonable, as all individual Unisawp V2 transactions require approximately the same amount of gas. Thus, given the base fee per gas, we can compute the approximate base fee for a Uniswap transaction.

A trade $T_v$ to exchange $\delta _x$ tokens $X$ entering the mempool at time ${t_0}$ is identified as $T_v=(\delta_{v_x},s,f,b,x_{0},y_{0},{t_0})$. Here, $s$ is the specified slippage tolerance and $f$ the transaction fee. The trade would output $\delta _{v_y}$ tokens $Y$ if executed at time ${t_0}$, where 
\begin{equation*}
    \delta  _{v_y}= y_{0} - \frac{x_{0} \cdot y_{0}}{x_{0}+(1-f)\delta_x} = \frac{y_{0} (1-f )\delta_x}{x_{0}+(1-f)\delta_x}. 
\end{equation*}
Time advances when a trade is executed in pool $X\rightleftharpoons Y$, and as the transaction might not execute at time ${t_0}$, $\delta  _{v_y}$ is only an estimate. Assuming that the trade executed at time ${t_1}$, the trader will receive 
\begin{equation*}
    \tilde{\delta}  _{v_y} = \frac{y_{1} (1-f )\delta_x}{x_{1}+(1-f)\delta_x}, 
\end{equation*}
tokens $Y$. Depending on the changes in the pool reserves between time ${t_0}$ and ${t_1}$, the trader might receive more or less tokens $Y$. In order to control how bad the exchange rate becomes for the traders, they specify a slippage tolerance $s$. The trade will only execute at time $t_1$, if
$$ \tilde{\delta} _{v_y} \geq (1-s) \delta_{v_y}.$$
Otherwise, the trade will fail to execute. With the sandwich game, we analyze how traders optimally set the slippage tolerance to achieve a low expected trade cost.

\subsection{Attack Model}

The predatory trading bot listens to the inflowing transactions in the mempool. Upon noticing the trade $T_v=(\delta_{v_x},s,f,b,x_{0},y_{0},{t_0})$ entering the mempool, the predatory trading bot computes the optimal input for the sandwich attack ($\delta_{a_{x}}$) and assess whether the attack will be profitable. We assume optimal conditions for the predatory trading bot: access to unlimited funds, guaranteed transaction ordering, and only paying the base fee. Assuming that the predatory trading bot has access to unlimited funds is reasonable and represents the worst case for traders. Additionally, letting the miner be the predatory trading bot again represents the worst case for traders. Further, it allows the trading bot to only pay the base fee for its transactions and control transaction ordering. We further assume that the trading bot takes the front-running transaction's output as the input of the back-running transaction. 

\section{Sandwich Game}\label{sec:game}
\begin{figure}[t]
    \begin{subfigure}[]{0.21\textwidth}
        \centering
        \begin{tikzpicture}[auto,>=latex',every text node part/.style={align=center}]
        \node[cylinder,draw=white,thick,aspect=0.4,minimum height=1.4cm,minimum width=1.3cm,shape border rotate=90,cylinder uses custom fill, cylinder body fill=white!30,cylinder end fill=white!10,text=white] at (0,1.6) (A) {$x_{0} X$\\ $y_{0} Y$ };
            \node[cylinder,draw=black,thick,aspect=0.4,minimum height=1.4cm,minimum width=1.3cm,shape border rotate=90,cylinder uses custom fill, cylinder body fill=teal!30,cylinder end fill=teal!10] at (0,0) (A) {$x_{0} X$\\ $y_{0} Y$ } ;
            \path[->,text width=1 cm, align = center] (-1.6,0) edge node {$\delta_{v_x}$} (-0.7,0);
            \path[->,text width=1 cm, align = center] (0.7,0) edge node {$\delta_{v_y}$} (1.6,0);
           \node[cylinder,draw=white,thick,aspect=0.4,minimum height=1.4cm,minimum width=1.3cm,shape border rotate=90,cylinder uses custom fill, cylinder body fill=white!30,cylinder end fill=white!10,text=white] at (0,-1.6) (A) {$x_{2} X$\\ $y_{2} Y$ } ;
        \end{tikzpicture}
        \caption{Illustration of an ordinary Uniswap transaction.}\label{fig:nosand}
    \end{subfigure}
    \hfill
    \begin{subfigure}[]{0.23\textwidth}
        \centering
        \begin{tikzpicture}[auto,>=latex',every text node part/.style={align=center}]
        \node[cylinder,draw=black,thick,aspect=0.4,minimum height=1.4cm,minimum width=1.3cm,shape border rotate=90,cylinder uses custom fill, cylinder body fill=purple!30,cylinder end fill=purple!10] at (0,1.6) (A) {$x_{0} X$\\ $y_{0} Y$ } ;
            \path[->,text width=1 cm, align = center] (-1.6,1.6) edge node {$\delta^{\text{in}}_{a_{x}}$} (-0.7,1.6);
            \path[->,text width=1 cm, align = center] (0.7,1.6) edge node {$\delta_{a_{y}}$} (1.6,1.6);
            \node[cylinder,draw=black,thick,aspect=0.4,minimum height=1.4cm,minimum width=1.3cm,shape border rotate=90,cylinder uses custom fill, cylinder body fill=teal!30,cylinder end fill=teal!10] at (0,0) (A) {$x_{1} X$\\ $y_{1} Y$ } ;
            \path[->,text width=1 cm, align = center] (-1.6,0) edge node {$\delta_{v_x}$} (-0.7,0);
            \path[->,text width=1 cm, align = center] (0.7,0) edge node {$\tilde{\delta}_{v_y}$} (1.6,0);
            \node[cylinder,draw=black,thick,aspect=0.4,minimum height=1.4cm,minimum width=1.3cm,shape border rotate=90,cylinder uses custom fill, cylinder body fill=purple!30,cylinder end fill=purple!10] at (0,-1.6) (A) {$x_{2} X$\\ $y_{2} Y$ } ;
            \path[->,text width=1 cm, align = center] (-1.6,-1.6) edge node {$\delta_{a_{y}}$} (-0.7,-1.6);
            \path[->,text width=1 cm, align = center] (0.7,-1.6) edge node {$\delta^{\text{out}}_{a_{x}}$} (1.6,-1.6);      
        \end{tikzpicture}
        \caption{Illustration of a sandwich attacked Uniswap transaction.}\label{fig:sand}
    \end{subfigure}\vspace{-8pt}
\caption{Illustration of a sandwich attack. In Figure~\ref{fig:nosand} the trade executes without being attacked, while the trade is front- and back-run in Figure~\ref{fig:sand}.}
\label{architecture}
\vspace{-0.3cm}
\end{figure}
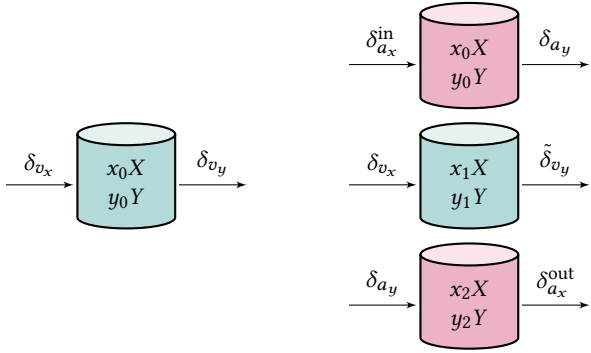

We start by going through the general mechanism of the game. The victim submits a transaction $T_v$ wishing to exchange $\delta_{v_x}>0$ in pool $X\rightleftharpoons Y$, with respective reserves $x_{0}>0$ and $y_{0}>0$. The pools transaction fee is $f$ ($0\leq f<1$) and the transaction's slippage tolerance is $s$ ($0<s<1$). Transaction $T_v=(\delta_{v_x},s,f,b,x_{0},y_{0},{t_0})$ enters the mempool at time ${t_0}$. When submitting the trade, the victim expects $\delta_{v_y}$ tokens $Y$. $\delta_{v_y}$ corresponds to the number of tokens the victim would receive if no other trade is executed beforehand, i.e., if the reserves in the pool do not shift in the meantime (cf. Figure~\ref{fig:nosand}). Thus,
$$\delta_{v_y} = \frac{y_{0}(1-f )\delta_{v_x}}{x_{0}+(1-f)\delta_{v_x}}.$$
However, when a sandwich attack occurs, the predatory bot first executes a transaction $T_{A_1}$ exchanging $\delta^{\text{in}}_{a_{x}}>0$ tokens $X$ for $\delta _{a_y}$ tokens $Y$, where $$\delta _{a_y} = \frac{y_{0}(1-f )\delta^{\text{in}}_{a_{x}}}{x_{0}+(1-f)\delta^{\text{in}}_{a_{x}}}.$$
Now the victims transaction executes at time $t_1$, assuming that the slippage tolerance is not overshot, with unfavourable reserves $x_{1}= x_{0}+\delta^{\text{in}}_{a_{x}}$ and $y_{1}= (x_{0}y_{0})/(x_{0}+\delta^{\text{in}}_{a_{x}}(1-f))$. The victim only receives
$$\tilde{\delta} _{v_y} = \frac{y_{1}(1-f )\delta_{v_x}}{x_{1}+(1-f)\delta_{v_x}}= \frac{\frac{x_{0} y_{0}}{x_{0} \delta^{\text{in}}_{a_{x}}} (1-f )\delta_{v_x}}{x_{0}+\delta^{\text{in}}_{a_{x}}+(1-f)\delta_{v_x}}$$
tokens $Y$, and $\tilde{\delta} _{v_y}<\delta _{v_y}$. To finish the attack, the bot exchanges $\delta _{a_y}$ tokens $Y$ at time $t_2$ with pool reserves $x_{2}= x_{1}+\delta_{v_x}$ and $y_{2}= (x_{1}y_{1})/(x_{1}+\delta_{v_{x}}(1-f))
$. In this final transaction $T_{A_2}$, the bot receives
$$\delta^{\text{out}}_{a_{x}} = \frac{x_{2}(1-f )\delta_{a_y}}{y_{2}+(1-f)\delta_{a_y}}= \frac{( x_{0}+\delta^{\text{in}}_{a_{x}}+\delta_{v_x})(1-f )\delta_{a_y}}{\frac{(x_{0}+\delta^{\text{in}}_{a_{x}})\left(\frac{x_{0}y_{0}}{x_{0}+\delta^{\text{in}}_{a_{x}}(1-f)}\right)}{(x_{0}+\delta^{\text{in}}_{a_{x}})+\delta_{v_{x}}(1-f)}+(1-f)\delta_{a_y}}$$
tokens $X$. The bots profit $P_a$ is
$$P_a=\delta^{\text{out}}_{a_{x}}-\delta^{\text{in}}_{a_{x}} -2b,$$ where $b$ is the base fee in the currency $X$. Note, that the base fee $b\geq0$ is fixed for a block and known.

\subsection{Adversary Perspective}

We start by finding the optimal strategy for a predatory trading bot and first find the best attack input $\delta^{\text{in}}_{a_{x}}$: the input maximizing  $P_a=\delta^{\text{out}}_{a_{x}}-\delta^{\text{in}}_{a_{x}}-2b$. For this, we consider an arbitrary victim transaction $T_v=(\delta_{v_x},s,f,b,x_{0},y_{0},{t_0})$ in pool $X\rightleftharpoons Y$. First, we consider the case, where $s$ is not set for the transaction, i.e. $s=1$. We show that bot can then analytically determine the the optimal sandwich attack size in Lemma~\ref{lem:slipda}.

\begin{lemma}\label{lem:slipda}
    We can analytically determine the trading bot's optimal input, ($\delta^o_{a_x}$) when the victim's transaction $T_v$ indicates no slippage tolerance, i.e., $s=1$.
\end{lemma}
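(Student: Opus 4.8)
The plan is to treat the bot's profit as a single-variable function of the front-run size and reduce the first-order optimality condition to a polynomial equation that turns out to be quadratic. Throughout I write $a=\delta^{\text{in}}_{a_{x}}$ for the decision variable, keep $\delta_{v_x},f,x_{0},y_{0}$ as fixed parameters, and set $r=1-f$. First I would record that the hypothesis $s=1$ makes the slippage constraint $\tilde{\delta}_{v_y}\geq(1-s)\delta_{v_y}=0$ vacuous, since $\tilde{\delta}_{v_y}>0$ for every $a>0$. Hence the victim's transaction executes no matter how large the front-run is, and the bot faces the \emph{unconstrained} problem $\max_{a>0}P_a(a)$ with $P_a(a)=\delta^{\text{out}}_{a_{x}}(a)-a-2b$. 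This is precisely why $s=1$ is the tractable case: the optimization has no slippage-induced boundary, so the maximizer is determined by a stationary-point condition alone.

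Next I would substitute the reserve-update chain from the game description, namely $x_{1}=x_{0}+a$, $y_{1}=x_{0}y_{0}/(x_{0}+ra)$, $x_{2}=x_{1}+\delta_{v_x}$, $y_{2}=x_{1}y_{1}/(x_{1}+r\delta_{v_x})$, and $\delta_{a_y}=y_{0}ra/(x_{0}+ra)$, into the expression for $\delta^{\text{out}}_{a_{x}}$. Clearing the nested fractions, the common factor $y_{0}$ cancels and $\delta^{\text{out}}_{a_{x}}(a)$ becomes a single rational function $N(a)/M(a)$ in which $N$ is a cubic and $M$ a quadratic in $a$; concretely one finds $M(a)=r^{2}a^{2}+(x_{0}+r^{2}q)a+x_{0}^{2}$ and $N(a)=r^{2}a\,(a+x_{0}+\delta_{v_x})(a+q)$, writing $q=x_{0}+r\delta_{v_x}$. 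Since $P_a$ is smooth on $(0,\infty)$ with $P_a(0^{+})=-2b$ and a finite limit as $a\to\infty$, any interior optimizer is a stationary point, and the first-order condition $P_a'(a)=0$ is equivalent to $\tfrac{d}{da}\delta^{\text{out}}_{a_{x}}(a)=1$, i.e. to $N'(a)M(a)-N(a)M'(a)=M(a)^{2}$.

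The crux is to simplify this equation. Although both sides are a priori quartic in $a$, the degree-four and degree-three coefficients on the two sides coincide and cancel, so the condition collapses to a quadratic $\alpha a^{2}+\beta a+\gamma=0$ whose coefficients are explicit polynomials in $x_{0},y_{0},\delta_{v_x},f$. Solving with the quadratic formula and retaining the positive root lying in the profitable region yields the closed-form optimal input $\delta^o_{a_x}$, which is exactly the analytic determination claimed.

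I expect the two load-bearing steps to be the following. \textbf{(i)} The algebraic cancellation that turns an apparent quartic into a genuine quadratic: this hinges on carefully matching the two leading coefficients of $N'M-NM'$ against those of $M^{2}$, and is the delicate part of the computation. \textbf{(ii)} Certifying that the selected root is the global maximizer rather than a spurious stationary point; for this I would verify the second-order condition, or equivalently argue that $P_a$ is concave on the interval containing the root and that this root dominates both the $a\to 0^{+}$ and $a\to\infty$ boundary behavior, so that the quadratic formula indeed singles out $\delta^o_{a_x}$.
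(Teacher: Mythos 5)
Your proposal takes essentially the same route as the paper: both treat $s=1$ as an unconstrained maximization of $\delta^{\text{out}}_{a_{x}}-\delta^{\text{in}}_{a_{x}}$ (the constant $2b$ being irrelevant), both reduce the first-order condition to a quadratic in the attack input — your formulation $N'M-NM'=M^{2}$ with the quartic and cubic coefficients cancelling is exactly the paper's explicitly displayed derivative, whose numerator is quadratic in $\delta^{\text{in}}_{a_{x}}$ over the denominator $M^{2}$ — and both certify the positive root via a second-order check. The degree-cancellation claim you flag as the delicate step is indeed correct (the leading terms $r^{4}$ and $2r^{2}(x_{0}+r^{2}q)$ agree on both sides), so your outline matches the paper's proof in substance.
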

\begin{proof} 
    To maximize $P_a$, it is sufficient for the bot to maximize $\delta^{\text{diff}}_a=\delta^{\text{out}}_{a_{x}}-\delta^{\text{in}}_{a_{x}}$. We start by finding the zero crossing of the derivative of $\delta^{\text{diff}}_a$ with respect to $\delta^{\text{in}}_{a_{x}}$.

    \begin{align*}
        &\frac{\partial \delta^{\text{diff}}_a}{\partial \delta^{\text{in}}_{a_{x}}}\\
        &=\frac{x_{0}({\delta^{\text{in}}_{a_{x}}}^2 f(\delta_{v_x} (1-f)^2 -(2-f)x_{0})}{\left(\delta^{\text{in}}_{a_{x}} (1 - f)^2 (\delta^{\text{in}}_{a_{x}} + \delta_{v_x} - 
      \delta_{v_x}  f) + \delta^{\text{in}}_{a_{x}} (2 - (2 - f) f) x_{0} + x_{0}^2 \right)^2}\\
      &+\frac{ 2 \delta^{\text{in}}_{a_{x}} x_{0} (\delta_{v_x} (1 - f)^2 -(2 -f) f\cdot  x_{0}) )}{\left(\delta^{\text{in}}_{a_{x}} (1 - f)^2 (\delta^{\text{in}}_{a_{x}} + \delta_{v_x} - 
      \delta_{v_x}  f) + \delta^{\text{in}}_{a_{x}} (2 - (2 - f) f) x_{0} + x_{0}^2 \right)^2}\\
      &+ \frac{x_{0}^2 (\delta_{v_x}^2 (1 -f)^3 + \delta_{v_x} (2-f)(1-f)^2 x_{0} - (2 - 
           f) f x_{0}^2)}{\left(\delta^{\text{in}}_{a_{x}} (1 - f)^2 (\delta^{\text{in}}_{a_{x}} + \delta_{v_x} - 
      \delta_{v_x}  f) + \delta^{\text{in}}_{a_{x}} (2 - (2 - f) f) x_{0} + x_{0}^2 \right)^2} 
    \end{align*}
    
    The single zero crossing of $\partial \delta^{\text{diff}}_a/\partial \delta^{\text{in}}_{a_{x}}$, such that $ \delta^{\text{in}}_{a_{x}} >0$ is located at
    \begin{align*}
       \delta^{\text{o}}_{a_{x}} =& \frac{(
     \delta_{v_x} (1 - f)^2 x_{0} - (2 - f) f x_{0}^2 }{(
     (2 -f) f x_{0}- \delta_{v_x} (1 - f)^2 f  ) } \\&+\frac{ \sqrt{
      \delta_{v_x}^2 (1 - f)^3 x_{0} (x_{0} - (1 - f)^2 f (\delta_{v_x} + x_{0}))})}{(
     (2 -f) f x_{0}- \delta_{v_x} (1 - f)^2 f  ) } .
    \end{align*}
    $$$$
     
   As $$\left. \frac{\partial ^2\delta^{\text{diff}}_a}{ {\partial \delta^{\text{in}}_{a_{x}}} ^2}\right\vert_{\delta^{\text{o}}_{a_{x}}}<0,$$
   $\delta^o_{a_x}$ is the trading bot's optimal input.
\end{proof}

While one might expect that the input of the optimal attack $\delta_{a_x}$ to be infinite, Lemma~\ref{lem:slipda} shows that this is not the case. The optimal input amount is limited, as the bot performing the sandwich attack needs to pay the fee $f$ twice, which increases with the input amount.

However, in most cases, traders will specify the slippage tolerance $s$, further limiting the maximum input size of the bots attack. In Lemma~\ref{lem:slips} we find the bot's maximal input such that the trade executes and show that the bot can compute it analytically. 

\begin{lemma}\label{lem:slips}
    The bot's maximal input ($\delta^s_{a_x}$) for a transaction exchanging $\delta _x$ tokens $X$ with slippage tolerance $s$ such that the victim's trade still executes can be calculated analytically and is given in the proof.
\end{lemma}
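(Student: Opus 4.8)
The plan is to exploit the fact that the slippage constraint $\tilde{\delta}_{v_y} \geq (1-s)\delta_{v_y}$ is the only thing limiting how large the bot may set its front-running input $\delta^{\text{in}}_{a_x}$. First I would observe that the victim's realized output $\tilde{\delta}_{v_y}$ is strictly decreasing in $\delta^{\text{in}}_{a_x}$: a larger front-run pushes the pool further along the bonding curve, so $y_1$ shrinks and $x_1$ grows, both of which reduce $\tilde{\delta}_{v_y}$. Consequently the feasible set $\{\delta^{\text{in}}_{a_x} : \tilde{\delta}_{v_y} \geq (1-s)\delta_{v_y}\}$ is an interval $[0, \delta^s_{a_x}]$, and the maximal input $\delta^s_{a_x}$ is exactly the value at which the constraint binds with equality.

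Second, I would set $\tilde{\delta}_{v_y} = (1-s)\delta_{v_y}$ and substitute the closed forms from the game description, namely $\tilde{\delta}_{v_y} = \frac{x_0 y_0 (1-f)\delta_{v_x}}{(x_0 + (1-f)\delta^{\text{in}}_{a_x})(x_0 + \delta^{\text{in}}_{a_x} + (1-f)\delta_{v_x})}$ and $\delta_{v_y} = \frac{y_0(1-f)\delta_{v_x}}{x_0 + (1-f)\delta_{v_x}}$. The common factor $y_0(1-f)\delta_{v_x}$ cancels from both sides, leaving a rational equation in $\delta^{\text{in}}_{a_x}$ alone.

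Third, clearing denominators turns this into the polynomial identity $x_0(x_0 + (1-f)\delta_{v_x}) = (1-s)(x_0 + (1-f)\delta^{\text{in}}_{a_x})(x_0 + \delta^{\text{in}}_{a_x} + (1-f)\delta_{v_x})$. Expanding and grouping by powers gives a quadratic $A(\delta^{\text{in}}_{a_x})^2 + B\,\delta^{\text{in}}_{a_x} + C = 0$ with $A = (1-s)(1-f)$, $B = (1-s)(x_0(2-f) + (1-f)^2\delta_{v_x})$ and $C = -s\,x_0(x_0 + (1-f)\delta_{v_x})$. Since $0<s<1$, $0\le f<1$ and all reserves are positive, we have $A>0$ and $C<0$, so the roots have opposite sign and exactly one is positive; that root $\delta^s_{a_x} = \frac{-B + \sqrt{B^2 - 4AC}}{2A}$ is the analytic expression claimed.

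The main obstacle I anticipate is purely bookkeeping: expanding the product on the right-hand side and collecting the coefficients $A, B, C$ without sign errors, since each carries the fee and slippage factors. The conceptual content — monotonicity forcing the constraint to bind, and the sign of $C$ guaranteeing a unique positive root — is short, so the only real risk lies in the algebra, which I would sanity-check by confirming that $\delta^s_{a_x} \to 0$ as $s \to 0$ (a zero-slippage trade must admit no front-run), which indeed follows since $C \to 0$ and $B>0$ force the positive root to vanish.
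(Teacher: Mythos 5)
Your proposal is correct and follows essentially the same route as the paper: the maximal attack input is the value at which the slippage constraint binds with equality, and solving the resulting quadratic in $\delta^{\text{in}}_{a_x}$ gives the closed form. Your coefficients $A$, $B$, $C$ in fact reproduce the paper's stated solution exactly (the paper's discriminant term satisfies $n = (1-f)^2\left(B^2-4AC\right)$), and your explicit monotonicity and root-sign arguments simply spell out steps the paper leaves implicit.
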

\begin{proof} 
We consider a sandwich attack with initial input $\delta_{a_{x}}$, changing the pool reserves from $x_{0}$ to $x_{1}=x_{0}+\delta_{a_{x}}$ tokens $X$ and from $y_{0}$ to $y_{1}=(x_{0} y_{0})/(x_{0} +\delta_{a_{x}})$ tokens $Y$. The new output of the victim transaction, assuming that it goes through will be
\begin{align*}
    \tilde{\delta} _{v_y} &= \frac{y_{1}(1-f )\delta_{v_x}}{x_{1}+(1-f)\delta_{v_x}}= \frac{\frac{x_{0} y_{0}}{x_{0} +\delta_{a_{x}}} (1-f )\delta_{v_x}}{x_{0}+\delta_{a_{x}}+(1-f)\delta_{v_x}},
\end{align*}     
and the victims transaction will go through, if
\begin{align*}
    \tilde{\delta} _{v_y} &\geq (1-s) \delta_{v_y}\\
    \frac{\frac{x_0 y_0}{x0 +\delta_{a_{x}}} (1-f )\delta_{v_x}}{x_0+\delta_{a_{x}}+(1-f)\delta_{v_x}} &\geq (1-s) \frac{y_0 (1-f )\delta_{v_x}}{x_0+(1-f)\delta_{v_x}}.
\end{align*}
Thus, the bot's maximal input ($\delta_{a_x}^s$) increases the slippage incurred by the victim to its tolerance, i.e., $\tilde{\delta} _{v_y} = (1-s) \delta_{v_y}$. Solving for $\delta_{a_x}^s$, we find that the maximal input is 
$$ \delta_{a_x}^s  = \frac{\frac{\sqrt{n(x_{0},f,\delta_{v_x},s)}}{1-s}-\delta_{v_x} (1-f)^3 -(2-f)(1-f)x_{0}  }{2(1-f)^2},$$
where
\begin{align*}
        n(x_{0},f,\delta_{v_x},s)=&(1 - f)^2 (1 - s) (\delta_{v_x}^2 (1-f)^4 (1 - s) \\
        &+ 2 \delta_{v_x} (1 -f)^2 (2 - f (1 - s))x_{0}  \\
        &+ (4 - f (4 - f (1 - s))) x_{0}^2.\qedhere
\end{align*}
\end{proof}

Following from Lemma~\ref{lem:slipda} and Lemma~\ref{lem:slips}, we find that the bot's optimal input is $\delta^{\text{in}}_{a_{x}} = \min \{\delta_{a_x}^o,\delta_{a_x}^s\}$ in Theorem~\ref{thm:da}. In case the profit of the corresponding attack is negative, not profitable attack exitst and the bot does not execute any attack.
\begin{theorem}\label{thm:da}
    The bot's optimal input is 
    $\delta^{\text{in}}_{a_{x}} = \min \{\delta_{a_x}^o,\delta_{a_x}^s\}.$
\end{theorem}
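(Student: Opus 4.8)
The plan is to read the bot's problem as maximizing its profit $P_a = \delta^{\text{diff}}_a - 2b$ over exactly those front-running inputs for which the victim's transaction still executes, and then to glue the two preceding lemmas together: Lemma~\ref{lem:slipda} describes the shape of the unconstrained objective, while Lemma~\ref{lem:slips} describes the feasible region. Since $2b$ is fixed for the block and independent of $\delta^{\text{in}}_{a_x}$, maximizing $P_a$ is equivalent to maximizing $\delta^{\text{diff}}_a$, so I would work with the latter throughout.

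First I would pin down the feasible region. In the derivation of Lemma~\ref{lem:slips}, the victim's realized output $\tilde{\delta}_{v_y}$ is strictly decreasing in the front-run size $\delta^{\text{in}}_{a_x}$, and the execution condition $\tilde{\delta}_{v_y} \ge (1-s)\delta_{v_y}$ holds exactly when $\delta^{\text{in}}_{a_x} \le \delta_{a_x}^s$. Hence any successful attack must use an input in $(0,\delta_{a_x}^s]$: for larger inputs the victim's trade fails, no back-run is possible, and the profit collapses. This reduces the bot's problem to maximizing $\delta^{\text{diff}}_a$ over the interval $(0,\delta_{a_x}^s]$.

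Next I would establish that $\delta^{\text{diff}}_a$ is unimodal in $\delta^{\text{in}}_{a_x}$ with its peak at $\delta_{a_x}^o$. Lemma~\ref{lem:slipda} already gives that $\delta_{a_x}^o$ is the unique positive zero of $\partial \delta^{\text{diff}}_a / \partial \delta^{\text{in}}_{a_x}$ and that the second derivative is negative there. I would complement this with the observation that $\delta^{\text{diff}}_a \to 0$ as $\delta^{\text{in}}_{a_x} \to 0^+$, while an arbitrarily small front-run already yields strictly positive profit (the sandwich inequality $\tilde{\delta}_{v_y} < \delta_{v_y}$ from the game setup), so the derivative is positive for small inputs. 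Because the derivative is continuous and $\delta_{a_x}^o$ is its only positive root, it must remain positive on $(0,\delta_{a_x}^o)$ and be negative on $(\delta_{a_x}^o,\infty)$; thus $\delta^{\text{diff}}_a$ increases strictly up to $\delta_{a_x}^o$ and decreases strictly thereafter. With unimodality in hand the conclusion follows by a two-case argument: if $\delta_{a_x}^o \le \delta_{a_x}^s$ the unconstrained maximizer lies inside the feasible interval and the optimum is $\delta_{a_x}^o$; if instead $\delta_{a_x}^o > \delta_{a_x}^s$, then $\delta^{\text{diff}}_a$ is still on its increasing branch throughout $(0,\delta_{a_x}^s]$ and the maximum is attained at the right endpoint $\delta_{a_x}^s$. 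In both cases the maximizer equals $\min\{\delta_{a_x}^o,\delta_{a_x}^s\}$, and if the resulting $P_a$ is negative the bot abstains.

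I expect the main obstacle to be the unimodality step, since Lemma~\ref{lem:slipda} only asserts a single positive critical point together with local concavity there and does not by itself exclude the objective dipping and re-rising elsewhere on $(0,\infty)$. The cleanest remedy is the sign argument above — uniqueness of the positive root plus the positive sign of the derivative near $0$ forces the single-peak shape — but to be rigorous one should also check that the (squared) denominator of $\partial\delta^{\text{diff}}_a/\partial\delta^{\text{in}}_{a_x}$ never vanishes on the relevant range and that $\delta_{a_x}^s > 0$, so that the interval $(0,\delta_{a_x}^s]$ is genuinely nonempty and the reduction to a one-dimensional unimodal maximization is valid.
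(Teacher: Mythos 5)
Your proposal is correct and follows essentially the same route as the paper: combine the interior optimum $\delta_{a_x}^o$ from Lemma~\ref{lem:slipda} with the feasibility bound $\delta_{a_x}^s$ from Lemma~\ref{lem:slips}, and take whichever binds. The paper's own proof is exactly this two-case argument stated tersely; your explicit unimodality step (uniqueness of the positive critical point together with the positive sign of the derivative near $0^+$) supplies a detail the paper leaves implicit.
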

\begin{proof} 
    From Lemma~\ref{lem:slipda} we know, that the there is a single maximum $\delta_{a_x}^o$, such that $\delta^{\text{in}}_{a_{x}}>0$. However, in case the victim's trade does not execute for $\delta^{\text{in}}_{a_{x}} = \delta_{a_x}^o$, the maximum will be at the endpoint of the permitted interval ($\delta_{a_x}^s$). Thus, $\delta^{\text{in}}_{a_{x}} = \min \{\delta_{a_x}^o,\delta_{a_x}^s\}$.
\end{proof}

In Figure~\ref{fig:profit}, we show the effects of the slippage tolerance, transaction fee, and trade size in relation to the pool size on a bots profit. Note that the simulation in Figure~\ref{fig:profit} disregards the base fee, which would remove the constant amount ($2b$) from the profit. Figure~\ref{fig:sizeandslip} demonstrates that, as expected, a bot's maximum profit is dependent on both the slippage tolerance and transaction size. For small transaction sizes, even transactions with high slippage tolerances are not attackable. Additionally, higher transaction fees allow higher slippage tolerances before the trades become attackable. Thus, the constant auto-slippage, independent of transaction size and transaction fee, suggested by Uniswap and SushiSwap, appears counter-intuitive.

\begin{figure}[t]
\centering
    \begin{subfigure}{\linewidth}
        \includegraphics[width = \textwidth]{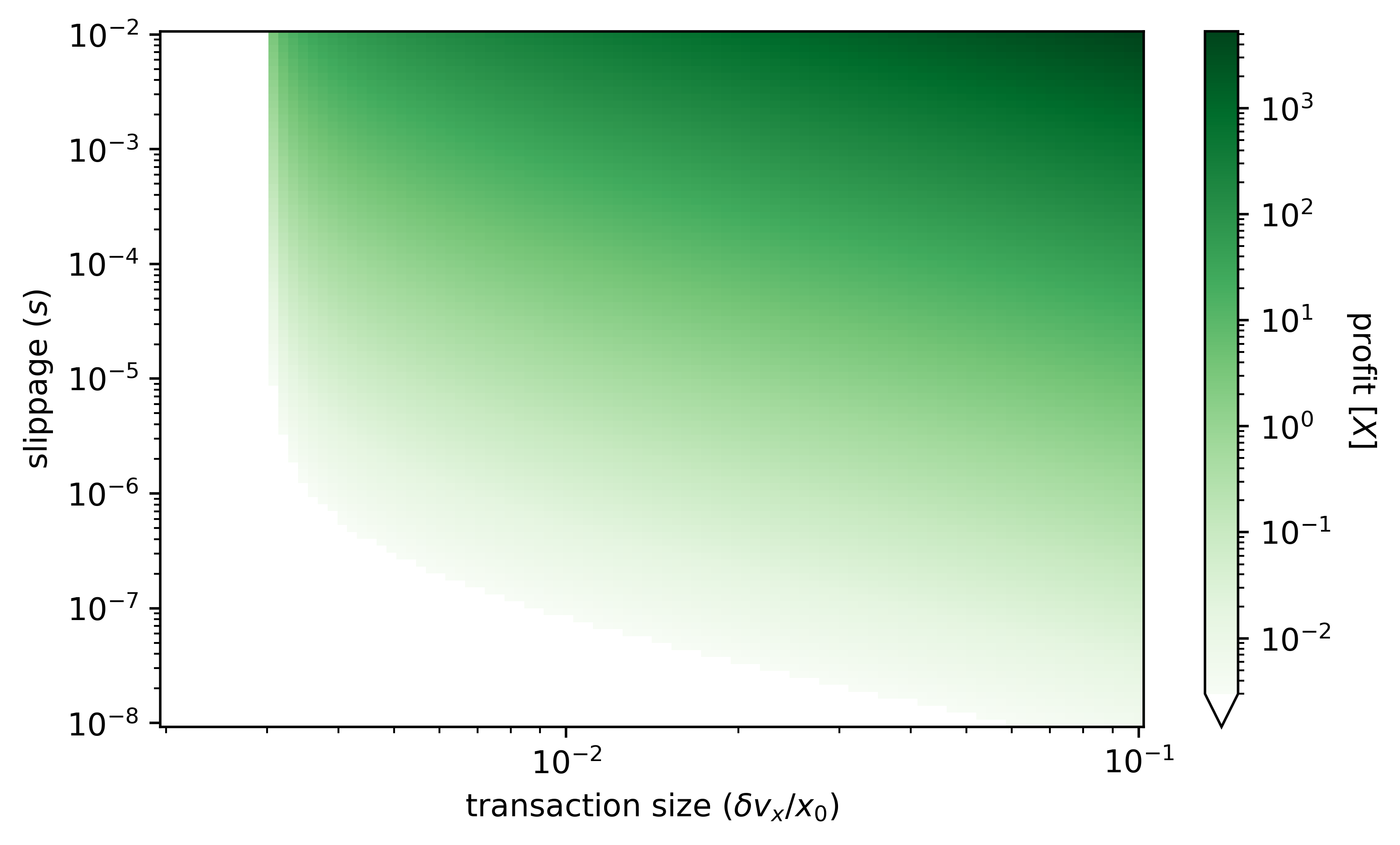}\vspace{-6pt}
        \caption{Effects of slippage tolerance ($s$) and transaction size in relation to pool size ($\delta_{v_x}/x_{0}$) on the bot's maximal profit. We set $f=0.003$.} \label{fig:sizeandslip}
    \end{subfigure}%

    \begin{subfigure}{\linewidth}
        \includegraphics[width = \textwidth]{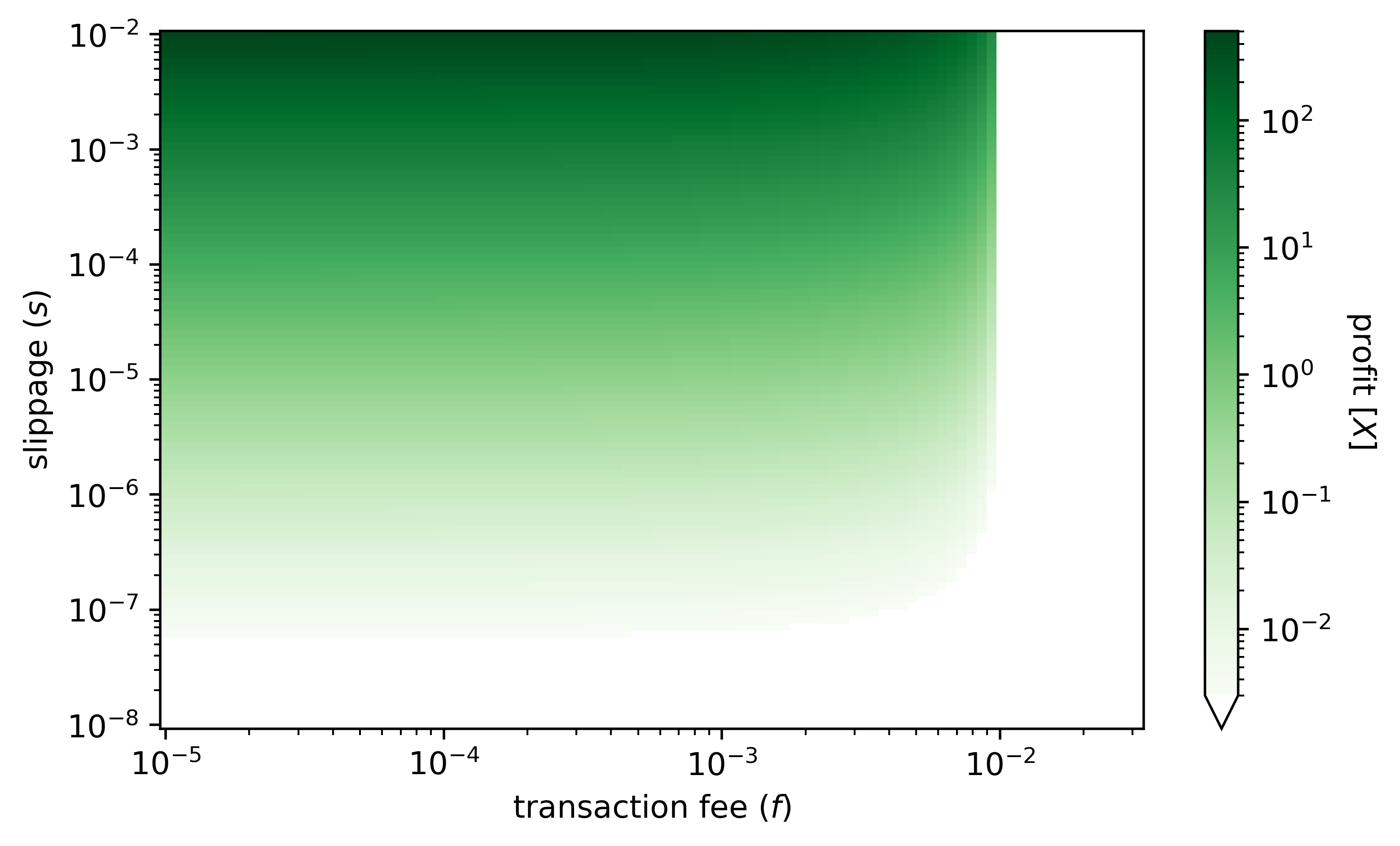}\vspace{-6pt}
        \caption{Effects of slippage tolerance ($s$) and transaction fee ($f$) on the bot's maximal profit. We set $\delta_{v_x}/x_{0}=0.01$.} \label{fig:feeandslip}
    \end{subfigure}\vspace{-10pt}
\caption{The effects of slippage tolerance ($s$), transaction fee ($f$), and transaction size in relation to pool size ($\delta_{v_x}/x_{0}$) on a predatory trading bot's maximal profit for a victim's trade $T_v=(\delta_{v_x},s,f,b,x_{0},y_{0},{t_0})$. We disregard the base fee, set $x_{0}$ to 5000000 $X$ and give the profit in the currency $X$.}\label{fig:profit}
\vspace{-0.2cm}
\end{figure}

Further, we show in Theorem~\ref{thm:profit} that the bot's maximum profit cannot exceed the victim's loss. We will rely on this result in Section~\ref{sec:trader} to allow for straightforward computations on the trader side. Seeing that the bot's maximum profit can trail the victim's loss, we wonder where the remaining profit is collected. By noticing that sandwich attacks increase the volume in the pool, we conclude that liquidity providers also profit through sandwich attacks. 
\begin{theorem}\label{thm:profit}
    The bot's profit cannot exceed the victim's loss.
\end{theorem}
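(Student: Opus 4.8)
The plan is to prove the equivalent statement that the liquidity providers' gain from the attack is non-negative, since (as the surrounding discussion anticipates) the victim's loss splits exactly into the bot's profit and the pool's gain. First I would dispose of the base fee: because $P_a=\delta^{\text{out}}_{a_{x}}-\delta^{\text{in}}_{a_{x}}-2b$ with $b\ge 0$, while the victim's loss is independent of $b$, it suffices to prove the claim for $b=0$, i.e.\ for the raw profit $P:=\delta^{\text{out}}_{a_{x}}-\delta^{\text{in}}_{a_{x}}$. To compare $P$ (denominated in $X$) with the victim's loss $L:=\delta_{v_y}-\tilde{\delta}_{v_y}$ (denominated in $Y$), I would treat token $X$ as the reference currency and value the forgone output at the exchange rate $p:=x_3/y_3$ that the pool implies once the attack has concluded, where $(x_3,y_3)=(x_2-\delta^{\text{out}}_{a_{x}},\,y_2+\delta_{a_y})$ are the final reserves. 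The target inequality then reads $P\le p\,L$.

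Second, I would set up an accounting identity by tracking token flows across the three trades $T_{A_1},T_v,T_{A_2}$. The bot is $Y$-neutral: it takes $\delta_{a_y}$ in the front-run and returns the same $\delta_{a_y}$ in the back-run, so the only net outflow of $Y$ from the pool is the victim's $\tilde{\delta}_{v_y}$. On the $X$ side the victim contributes $\delta_{v_x}$ while the bot extracts $P$, so the reserves move from $(x_0,y_0)$ to $(x_3,y_3)$ with $x_3=x_0+\delta_{v_x}-P$ and $y_3=y_0-\tilde{\delta}_{v_y}$. Comparing against the counterfactual no-attack state $(x_0+\delta_{v_x},\,y_0-\delta_{v_y})$ shows that the attack leaves the pool holding exactly $L$ additional tokens $Y$ and $P$ fewer tokens $X$; valuing this net difference at price $p$ identifies the liquidity providers' gain as precisely $p\,L-P$. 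Hence the theorem is equivalent to $p\,L-P\ge 0$.

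Third, and this is the crux, I would establish $p\,L\ge P$. The structural route bounds the bot's round trip: selling $\delta_{a_y}$ tokens $Y$ back into the pool earns an average price strictly below the pre-sale marginal price $x_2/y_2$, while the front-run purchase pays an average price strictly above $x_0/y_0$, whence $P<\delta_{a_y}\,\bigl(x_2/y_2-x_0/y_0\bigr)$. It then remains to bound $L$ from below—$\tilde{\delta}_{v_y}$ is strictly decreasing in the reserve of $X$ at which the victim's trade starts, and the front-run raises this reserve from $x_0$ to $x_1$—and to verify that $p\,L$ dominates the price-gap bound on $P$. Alternatively, one substitutes the closed forms for $\delta^{\text{out}}_{a_{x}},\delta_{a_y},\delta_{v_y},\tilde{\delta}_{v_y}$ and checks the resulting polynomial inequality in $\delta^{\text{in}}_{a_{x}}\ge 0$ directly.

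I expect the main obstacle to be exactly this last inequality. The difficulty is that $p\,L\ge P$ does \emph{not} reduce to the mere fact that fees push the pool onto a weakly larger constant-product curve ($x_3y_3\ge x_0y_0$): the convexity (AM--GM) bound on the no-attack state evaluated at price $p$ points the wrong way, so a one-line invariant argument cannot close the gap. One must instead retain the fee terms and exploit the precise monotonicity of the swap map. Handling these fee-dependent quantities cleanly, while keeping the cross-currency conversion honest, is where the real work lies; by contrast the base-fee reduction and the $Y$-neutrality bookkeeping are routine.
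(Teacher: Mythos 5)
Your base-fee reduction and your flow accounting (the pool ends with $L$ extra tokens $Y$ and $P$ fewer tokens $X$ than in the no-attack counterfactual) are both correct, but the crux inequality you isolate --- $P\le p\,L$ with $p=x_3/y_3$ the \emph{post-attack} marginal price --- is not merely hard to prove: it is false, and it fails identically in the zero-fee case, which is exactly the case where the bound must be tight. Take $f=b=0$ and let the bot play the slippage-binding input (optimal, since without fees the slippage constraint always binds), so $\tilde{\delta}_{v_y}=(1-s)\delta_{v_y}$, $L=s\delta_{v_y}$, and $\delta_{v_y}=y_0\delta_{v_x}/(x_0+\delta_{v_x})$. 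Since the bot is $Y$-neutral and the product is conserved,
\begin{align*}
y_3 &= y_0-(1-s)\delta_{v_y}=y_0\,\frac{x_0+s\delta_{v_x}}{x_0+\delta_{v_x}},
\qquad
x_3 = \frac{x_0y_0}{y_3}=\frac{x_0(x_0+\delta_{v_x})}{x_0+s\delta_{v_x}},
\end{align*}
so that
\begin{align*}
p\,L=\frac{x_3}{y_3}\,s\,\delta_{v_y}=\frac{s\,\delta_{v_x}\,x_0\,(x_0+\delta_{v_x})}{(x_0+s\,\delta_{v_x})^2},
\qquad
P=\frac{s\,\delta_{v_x}\,(x_0+\delta_{v_x})}{x_0+s\,\delta_{v_x}},
\end{align*}
where the closed form for $P$ is the one computed in the paper's proof (and easily rechecked). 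Hence $p\,L/P=x_0/(x_0+s\delta_{v_x})<1$ for every $s\in(0,1)$ and $\delta_{v_x}>0$: valued at your price, the bot's profit \emph{always strictly exceeds} the victim's loss. The structural reason is precisely the convexity fact you mention and then set aside: with $f=0$ all reserve states lie on the curve $xy=x_0y_0$, and $(x_3,y_3)$ is by construction the point of that curve whose marginal price equals $p$, i.e., the \emph{minimizer} of $x+py$ over the curve. So the attacked pool, valued at $p$, is worth weakly less than the no-attack state, which by your own accounting identity reads $pL-P\le 0$. Your reformulation (``the theorem is equivalent to a non-negative LP gain at the post-attack price'') is therefore self-refuting, and retaining the fee terms cannot rescue it, since the $f\to 0$ limit is where equality would have to be approached.

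The theorem is only true --- and the paper only claims it --- with the loss converted at the price at which it is realized, namely $x_2/y_2$, the spot price right after the victim's trade and \emph{before} the back-run (the largest price along the attack path). With $L_v=s\delta_{v_y}\cdot x_2/y_2$ the same computation gives $P/L_v\le x_0/(x_0+s\delta_{v_x})\le 1$, which is exactly the paper's argument: a WLOG reduction to $f=b=0$, substitution of the slippage-binding optimal input from Lemma~\ref{lem:slips}, a direct algebraic verification of $P_a/L_v\le 1$, and a separate limiting argument for $s=1$ (where profit and loss both tend to $\delta_{v_x}$). If you want to keep your (otherwise appealing) LP-accounting framing, you must restate it at that earlier price; at the post-attack price the statement you set out to prove is simply wrong.
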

\begin{proof} 
    Without loss of generality, we assume both the transaction fee $f$ and the base fee $b$ to be zero. Both would only decrease the bot's profit. Note, that without transaction fees, the slippage tolerance will restrict the bot's optimal input for all $s\neq 1$. We will start by analyzing the case where the victim  trade $T_v=(\delta_{v_x},s,f,b,x_{0},y_{0},{t_0})$ sets the slippage tolerance $s\neq 1$, for $f$ the bot's optimal input (Lemma~\ref{lem:slips}) becomes
$$
\delta^{\text{in}}_{a_{x}}  = \frac{1}{2}
\left(\vcenter{\hbox{$\displaystyle
\frac{\sqrt{(1-s) ( \delta_{v_x}^2 (1-s) +4  \delta_{v_x}x_{0}+4x_{0}^2) }}{1-s}- 2x_{0} -\delta_{v_x}
$}}\right).
$$
    The bot's profit is then given by

    \begin{align*}
        P_a =& \delta^{\text{out}}_{a_{x}} -\delta^{\text{in}}_{a_{x}}= \frac{ \delta_{v_x} s \left( \delta_{v_x}+x_{0}\right)}{\delta_{v_x} s+x_{0}},
    \end{align*}
    while the victims loss is given as
    \begin{align*}
        L_v &= s \cdot \delta_{v_y} \frac{x_{2}}{y_{2}}
        =  s \cdot \delta_{v_x} \frac{\left(\delta_{v_x} + \sqrt{\left (\delta_{v_x} ^2  +\frac{4  \delta_{v_x}x_{0}+4x_{0}^2}{1-s}  \right)} \right)^2}{4 x_{0}( \delta_{v_x}+x_{0})} .
    \end{align*}
   
    To obtain the loss $L_v$, we multiply the victim's loss in tokens $Y$ ($s \cdot \delta_{v_y}$) by the price $p_{y\rightarrow x}$ at the time the victim's losses were realized, such that it is in the same currency as the bot's profit.  
    To show that the profit cannot exceed the loss, we show $P_a/L_v\leq 1$.
    \begin{align*}
        P_a/L_v 
        &= \frac{4x_{0}( \delta_{v_x}+x_{0})^2}{(\delta_{v_x}s +x_{0})\left(\delta_{v_x} + \sqrt{\left (\delta_{v_x} ^2 +\frac{4  \delta_{v_x}x_{0}+4x_{0}^2}{1-s} \right)} \right)^2}\\
        &\leq \frac{4x_{0}( \delta_{v_x}+x_{0})^2}{(\delta_{v_x}s +x_{0})\left(\delta_{v_x} + (\delta_{v_x}+2 x_{0})\right)^2} =\frac{x_{0}}{\delta_{v_x} s+ x_{0}}
        \leq 1.
    \end{align*}
    We turn the the case where $s =1$. The bot's optimal input is then $\delta^{\text{in}}_{a_{x}} \rightarrow \infty$ and we find the associated profit to be
    \begin{align*}
        \lim_{\delta^{\text{in}}_{a_{x}} \rightarrow \infty}  P_a&=\lim_{\delta^{\text{in}}_{a_{x}} \rightarrow \infty}  ( \delta^{\text{out}}_{a_{x}} -\delta^{\text{in}}_{a_{x}}) \\
        &= \lim_{\delta^{\text{in}}_{a_{x}} \rightarrow \infty} \frac{\delta_{v_x} \delta^{\text{in}}_{a_{x}} (\delta_{v_x}+ 2x_{0}+\delta^{\text{in}}_{a_{x}})}{\delta_{v_x} \delta^{\text{in}}_{a_{x}}  +(\delta^{\text{in}}_{a_{x}}+x_{0})^2 }+x_{0})^2 = \delta_{v_x}.
    \end{align*}
    Further, for  $\delta^{\text{in}}_{a_{x}} \rightarrow \infty$ the victims loss $L_v =  \delta_{v_x}$, as  $\lim_{\delta^{\text{in}}_{a_{x}} \rightarrow \infty}  \delta_{v_y} =0$. Thus, the bot's gain cannot exceed the victim's loss. 
\end{proof}
\subsection{Trader Perspective}\label{sec:trader}
Intending to minimize the victim's expected transaction execution cost, we turn to the victim's perspective of the sandwich game. We again consider an arbitrary victim's transaction $T_v=(\delta_{v_x},s,f,b,x_{0},y_{0},{t_0})$.  First, we note that we consider the victim's transaction unattackable for
$$ s \cdot \delta_{v_y}\geq 2 b,$$
as $s \cdot \delta_{v_y}$ an upper bound for the bot's profit (cf. Theorem~\ref{thm:profit}). Note that here the base fee for the transaction is given in currency $Y$. Thus, high slippage tolerances and trade sizes make victim trades attackable. Any $s \leq s_a$, where
$$ s_a = \frac{2b}{\delta_{v_y}},$$
ensure that no profitable sandwich attack for the victim's transaction exists. However, by selecting a low slippage tolerance, potential victims risk their trade failing to execute due to the natural movements in the pool, from trades, or liquidity withdrawals. Therefore, it is unreasonable to set a low slippage tolerance to avoid a sandwich attack when this low slippage tolerance is associated with high expected costs linked with resubmitting failed transactions.

The costs of transaction failure consist of the cost of redoing the transaction and the cost associated with the price shift between the two blocks. We estimate the cost of redoing the transaction to be
$(l+m) b$,
where $l$ is the portion of the base fee used for a failed transaction, and $m$ is the potential increase of the base fee in the next block. We set $l=0.25$, as the gas used by a failed Uniswap transaction is approximately a quarter of that of a successful transaction~\cite{2021ethertransaction}. Additionally, we set $m= 0.125$, as it is the maximum increase of the base fee within a block~\cite{2021eip1559invest}. The expected cost of the associated price shift in the pool is denoted by $\mathbb{E}(s|\tilde{s} >s)\delta_{v_y}$. More precisely, $-\mathbb{E}(s|\tilde{s} >s)$ is expected fractional price change given that the transaction failed ($ \tilde{s} >s$). Here, $\tilde{s}$ is the block's price slippage. Finally, we denote the probability of the transaction failing for slippage tolerance $s$ and trade size $\delta_{v_x}$ as $p(s,\delta_{v_x})$. Note, that $p(s,\delta_{v_x})$ can be estimated reliably by looking at the recent history of the pool (cf. Section~\ref{sec:predicts}). 

Thus, an approximative upper bound for redoing the transaction is given as
\begin{align*}
    \sum ^\infty _{i =1} p(s,\delta_{v_x})^i ((l+m) b+ \mathbb{E}(s|\tilde{s} >s)\delta_{v_y} ) \\
    = \frac{p(s,\delta_{v_x})}{1-p(s,\delta_{v_x})} ((l+m) b+ \mathbb{E}(s|\tilde{s} >s)\delta_{v_y} )
\end{align*}
Setting the slippage tolerance to $s<s_r$, where $$ s_r = \frac{p(s,\delta_{v_x})}{1-p(s,\delta_{v_x})} \left( \frac{(l+m) b}{\delta_{v_y}} + \mathbb{E}(s|\tilde{s} >s)\right),$$
ensures that the estimated costs associated with the transaction failing to execute do not exceed the costs of a possible sandwich attack. Note, that finding $s_r$ is possible with a ternary search, as the left side of the equation decreases with $s$, while the right side of the equation increases with $s$.

In case $s_r< s_a$, the potential victim can choose $s\in [s_r,s_a)$ to make sure that no profitable sandwich attack exists. We will always choose $s = s=s_a-\varepsilon$, where $\varepsilon\rightarrow 0^+$ to minimise the costs of transaction failure. Simultaneously, the victim does not face an unreasonable high expected cost related to the transaction failing. On the other hand, if $s_r \leq s_a$, the potential victim cannot easily set the slippage tolerance to avoid both sandwich attacks and the risk of having to pay the costs related to the transaction failing. However, as we find in Section~\ref{sec:settings}, this generally only occurs for comparatively large transactions. In reality, these transactions are better divided into several smaller trades to reduce their price impact. Price impact is an unrelated effect a trader should consider before executing a trade. 

We conclude the analysis by presenting the algorithm utilized by the trader to choose the optimal slippage tolerance in Algorithm~\ref{alg:slip}.

\begin{algorithm}[htpb]
    \centering
    \caption{Setting Slippage}\label{alg:slip}
    \begin{algorithmic}
        \State For transaction $T_v=(\delta_{v_x},s,f,b,x_{0},y_{0},{t_0})$ in pool $X\rightleftharpoons Y$
        \State Calculate $s_a = \frac{2b}{\delta_{v_y}}$ and  $s_r =\frac{p(s,\delta_{v_x})}{1-p(s,\delta_{v_x})} \left( \frac{(l+m) b}{\delta_{v_y}} + \mathbb{E}(s|\tilde{s} >s)\right)$ for transaction $T_v$
        \State \textbf{if} $s_r< s_a$: 
        \State \hspace{\algorithmicindent} set $s = s=s_a-\varepsilon$, where $\varepsilon\rightarrow 0^+$
        \State \textbf{else}: 
        \State \hspace{\algorithmicindent} set $s=s_r$ 
        
    \end{algorithmic}
\end{algorithm}
Algorithm~\ref{alg:slip} can also be used to set the slippage tolerance in Uniswap V3. The implementation of Algorithm~\ref{alg:slip} will vary only slightly between Uniswap V2 and V3. The estimations of both the probability of the transaction failing for slippage tolerance $s$ and trade size $\delta_{v_x}$ ($p(s,\delta_{v_x})$) and the expected fractional price change given that the transaction failed ($-\mathbb{E}(s|\tilde{s} >s)$) are calculated with the specific liquidity distribution. For Uniswap V2 (cf. Section~\ref{sec:predicts}) the number of tokens reserved in the pool suffice for the prediction.

\section{Evaluation}\label{sec:eval}

We analyze past Uniswap data to compare the costs for traders using the slippage tolerance proposed by Uniswap and the sandwich game. The data description follows in the succeeding section. 
\begin{table*}[htpb]
\npdecimalsign{.}

\nprounddigits{2}
\centering
\begin{tabular}{@{}rn{1}{2}n{1}{2}n{1}{2}n{1}{2}n{1}{2}n{1}{2}n{1}{2}n{1}{2}@{}}
\toprule
{} & \multicolumn{2}{c}{USDC$\rightleftharpoons$WETH} & \multicolumn{2}{c}{USDC$\rightleftharpoons$USDT} & \multicolumn{2}{c}{WBTC$\rightleftharpoons$WETH} & \multicolumn{2}{c}{DPI$\rightleftharpoons$WETH}\\
{} &  \multicolumn{1}{c}{$\mu$} &  \multicolumn{1}{c}{$\sigma$} &  \multicolumn{1}{c}{$\mu$} & \multicolumn{1}{c}{$\sigma$} &       \multicolumn{1}{c}{$\mu$} & \multicolumn{1}{c}{$\sigma$} & \multicolumn{1}{c}{$\mu$} & \multicolumn{1}{c}{$\sigma$} \\
\textbf{size [\$]}  &           &       &       &     &           &       &         &              \\
\midrule
10     & 1.8037E-4 & 6.1791E-3 & 9.5209E-5 & 8.3135E-4 & 6.8258E-5 & 9.2361E-4 & 1.6525E-4 & 1.1871E-3 \\
100    & 1.8069E-4 & 6.3544E-3 & 9.5208E-5 & 8.3114E-4 & 6.8306E-5 & 9.2481E-4 & 1.6526E-4 & 1.1870E-3 \\
1000   & 1.8169E-4 & 6.4532E-3 & 9.5203E-5 & 8.2963E-4 & 6.8737E-5 & 1.0698E-3 & 1.6533E-4 & 1.1876E-3 \\
10000  & 1.8406E-4 & 7.0676E-3 & 9.5136E-5 & 8.4849E-4 & 7.1852E-5 & 4.5681E-3 & 1.6553E-4 & 1.2321E-3 \\
100000 & 1.8493E-4 & 7.6691E-3 & 9.4190E-5 & 1.1519E-3 & 8.0817E-5 & 1.6770E-2 & 1.6329E-4 & 1.3943E-3 \\
\bottomrule
\end{tabular}

\caption{Mean ($\mu$) absolute fractional price change ($r$) and volatility ($\sigma$) of absolute fractional price change for four Uniswap pools: USDC$\rightleftharpoons$WETH, USDC$\rightleftharpoons$USDT,  WBTC$\rightleftharpoons$WETH and DPI$\rightleftharpoons$WETH.}\label{tab:meanvol}
\vspace{-16pt}
\end{table*}
\subsection{Data Description}
To collect data, we launch a go-ethereum client and export all transactions executed on Uniswap V2. We collect all Uniswap V2 transactions recorded on Ethereum up to block 11709847 (on 23 January 2021). In the following data analysis, we focus on 120,000 blocks (from block 11589848 to block 11709847) in January 2021, a particularly active time for Uniswap V2 before the launch of Uniswap V3. Thus, the trade activity on Uniswap V2 at this time is uninfluenced by Uniswap V3.\footnote{We note that while the data precedes flashbots, flashbots, however, does not impact a pool's price fluctuations but the success of sandwich attacks. As we assume optimal conditions for sandwich attacks anyways, this does not impact our analysis.}  We obtain the price of each cryptocurrency in a common currency, US\$ in our case, from the pool reserves and Coinbase~\cite{2021coinbase}.

In the following, we analyze data from eight Uniswap pools. The pools analyzed are USDC$\rightleftharpoons$WETH, USDC$\rightleftharpoons$USTD, WBTC$\rightleftharpoons$WETH, DPI$\rightleftharpoons$WETH, WBTC$\rightleftharpoons$USDC, UNI$\rightleftharpoons$USDC, LINK$\rightleftharpoons$WETH, and KIMCHI$\rightleftharpoons$WETH. We choose pools through a combination of size and type\footnote{Type divides pools into normal pools, stable pools, and exotic pools. These categories were introduced by Uniswap~\cite{adams2021uniswap}.} to represent a representative sample of Uniswap pools.

\subsection{Slippage Prediction}\label{sec:predicts}

To understand the price changes between blocks, we start by analyzing the fractional price change in all eight Uniswap pools over 120,000 blocks in January 2021. The absolute fractional price change ($r$) is given as:
$$r = \frac{\lvert\tilde{\delta}_{v_y}-\delta_{v_y} \rvert}{\delta_{v_y}}= \lvert s \rvert.$$
We see that the fractional price change is dependent on the trade size. Thus, we find the average absolute fractional price change and its volatility for five trade sizes (\$10, \$100, \$1000, \$10000, and \$100000) in  Table~\ref{tab:meanvol} for a selection of four pools. We note that we consider the anticipated trade output ($\delta_{v_y}$) to be the trade size throughout the entire analysis. These trade sizes cover the majority of trades executed on Uniswap -- Uniswap's median trade size was \$634 in 2020~\cite{2021unireview}.

\begin{table*}[htpb]
\npdecimalsign{.}

\nprounddigits{2}
\centering
\begin{subtable}{\textwidth}
\centering
\begin{tabular}{@{}rn{1}{2}rn{1}{2}rn{1}{2}rn{1}{2}r@{}}
\toprule
{} &  \multicolumn{2}{c}{USDC$\rightleftharpoons$WETH} & \multicolumn{2}{c}{USDC$\rightleftharpoons$USDT} & \multicolumn{2}{c}{WBTC$\rightleftharpoons$WETH} & \multicolumn{2}{c}{DPI$\rightleftharpoons$WETH}\\
{} &  \multicolumn{1}{c}{$\mu$} &  \multicolumn{1}{c}{$\eta$} &  \multicolumn{1}{c}{$\mu$} & \multicolumn{1}{c}{$\eta$} &       \multicolumn{1}{c}{$\mu$} & \multicolumn{1}{c}{$\eta$} & \multicolumn{1}{c}{$\mu$} & \multicolumn{1}{c}{$\eta$}\\
\textbf{window size }  &           &       &       &     &           &       &         &              \\
\midrule
200   & -2.3710E-3 & 0.637 & -8.0359E-4 & 0.512 & -1.0284E-3 & 0.611 & -1.6532E-3 & 0.656 \\
2000  & -2.7430E-3 & 0.093 & -8.9489E-4 & 0.065 & -1.2213E-3 & 0.106 & -2.0253E-3 & 0.078 \\
20000 & -2.9304E-3 & 0.014 & -9.2740E-4 & 0.014 & -1.3719E-3 & 0.007 & -2.1318E-3 & 0.045 \\
\bottomrule
\end{tabular}\caption{$p(s) =0.01$}
\end{subtable}\vspace{-2pt}

\begin{subtable}{\textwidth}
\centering
\begin{tabular}{@{}rn{1}{2}rn{1}{2}rn{1}{2}rn{1}{2}r@{}}
\toprule
{} &  \multicolumn{2}{c}{USDC$\rightleftharpoons$WETH} & \multicolumn{2}{c}{USDC$\rightleftharpoons$USDT} & \multicolumn{2}{c}{WBTC$\rightleftharpoons$WETH} & \multicolumn{2}{c}{DPI$\rightleftharpoons$WETH}\\
{} &  \multicolumn{1}{c}{$\mu$} &  \multicolumn{1}{c}{$\eta$} &  \multicolumn{1}{c}{$\mu$} & \multicolumn{1}{c}{$\eta$} &       \multicolumn{1}{c}{$\mu$} & \multicolumn{1}{c}{$\eta$} & \multicolumn{1}{c}{$\mu$} & \multicolumn{1}{c}{$\eta$}\\
\textbf{window size }  &           &       &       &     &           &       &         &              \\
\midrule
200   & -9.2175E-4 & 0.124 & -9.0532E-5 & 0.024 & -1.4719E-4 & 0.063 & -2.6085E-4 & 0.063 \\
2000  & -9.7416E-4 & 0.013 & -7.7557E-5 & 0.021 & -1.0560E-4 & 0.022 & -1.9000E-4 & 0.025 \\
20000 & -9.8791E-4 & 0.007 & -8.3898E-5 & 0.019 & -7.8738E-5 & 0.020 & -1.5188E-4 & 0.018 \\
\bottomrule
\end{tabular}\caption{$p(s) =0.05$}
\end{subtable}\vspace{-2pt}

\begin{subtable}{\textwidth}
\centering
\begin{tabular}{@{}rn{1}{2}rn{1}{2}rn{1}{2}rn{1}{2}r@{}}
\toprule
{} &  \multicolumn{2}{c}{USDC$\rightleftharpoons$WETH} & \multicolumn{2}{c}{USDC$\rightleftharpoons$USDT} & \multicolumn{2}{c}{WBTC$\rightleftharpoons$WETH} & \multicolumn{2}{c}{DPI$\rightleftharpoons$WETH}\\
{} &  \multicolumn{1}{c}{$\mu$} &  \multicolumn{1}{c}{$\eta$} &  \multicolumn{1}{c}{$\mu$} & \multicolumn{1}{c}{$\eta$} &       \multicolumn{1}{c}{$\mu$} & \multicolumn{1}{c}{$\eta$} & \multicolumn{1}{c}{$\mu$} & \multicolumn{1}{c}{$\eta$}\\
\textbf{window size }  &           &       &       &     &           &       &         &              \\
\midrule
200   & -3.4878E-4 & 0.042 & -7.3481E-6 & 0.335 & -1.8537E-5 & 0.194 & -4.3575E-5 & 0.213 \\
2000  & -2.9881E-4 & 0.001 & -1.2389E-6 & 0.314 & -4.3408E-6 & 0.148 & -2.1846E-5 & 0.186 \\
20000 & -2.5581E-4 & 0.003 &  0.0000 & 0.310 & -1.0433E-6 & 0.114 & -7.8062E-6 & 0.143 \\
\bottomrule
\end{tabular}\caption{$p(s) =0.1$}\label{tab:acc3}
\end{subtable}\vspace{-6pt}
\caption{Average ($\mu$) and relative error ($\eta$) of slippage tolerance $s$ prediction using historical percentile for transaction failure probabilities $p(s) \in [0.01,0.05,0.1]$ and window sizes $w\in [200,2000,20000]$ for four Uniswap pools: USDC$\rightleftharpoons$WETH, USDC$\rightleftharpoons$USDT,  WBTC$\rightleftharpoons$WETH and DPI$\rightleftharpoons$WETH.}\label{tab:acc}
\vspace{-16pt}
\end{table*}
We notice immediately that the mean absolute fractional price change is small in all considered pools -- contradicting the common assumption that the price of cryptocurrencies fluctuates significantly, even between blocks. Instead, we find the price to be relatively constant between two blocks (around 13 seconds). Further, the average absolute price change is significantly less than the fractional slippage tolerance of $5 \cdot 10^{-3}$ proposed by Uniswap across all four pools~\cite{2021uniinterface}. The difference is even more startling as slippage only concerns negative price changes. 

In the sandwich game, the trader estimates the required slippage tolerance $s$ such that the probability of the transaction failing is $p(s)$. To allow facile computation, we estimate the required slippage tolerance $\hat{s}_{p(s)}^w$ such that the probability of transaction failure is $p(s)$ to be the $p(s)^{\text{th}}$ percentile of the observed fractional price change in the past $w$ blocks. Here, $w$ is the window size used for the estimation. We then compute the accuracy of our estimation over 120,000 blocks and summarize the results in Table~\ref{tab:acc}. There we show the mean ($\mu$) and the relative error ($\eta$) of the prediction of $\hat{s}_{p(s)}^w$ for a given the probability of transaction failure $p(s)$ and window size $w$. 

\begin{figure*}[htbp]
\centering

    \begin{subfigure}{0.47\linewidth}
        \includegraphics[width = \textwidth]{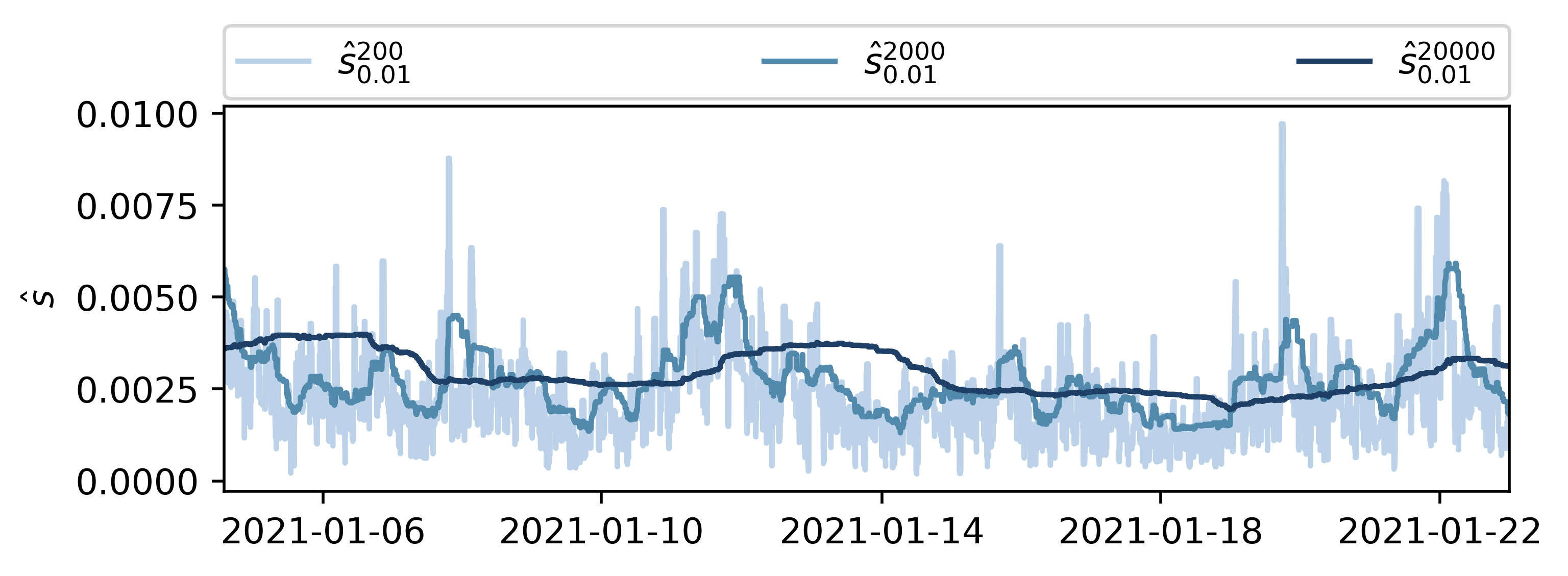}\vspace{-6pt}
        \caption{USDC$\rightleftharpoons$ETH} \label{fig:UW01}
    \end{subfigure}%
    \hfill
    \begin{subfigure}{0.47\linewidth}
        \includegraphics[width = \textwidth]{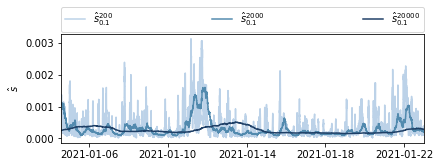}\vspace{-6pt}
        \caption{USDC$\rightleftharpoons$ETH} \label{fig:UW1}
    \end{subfigure}%
    
    \begin{subfigure}{0.47\linewidth}
        \includegraphics[width = \textwidth]{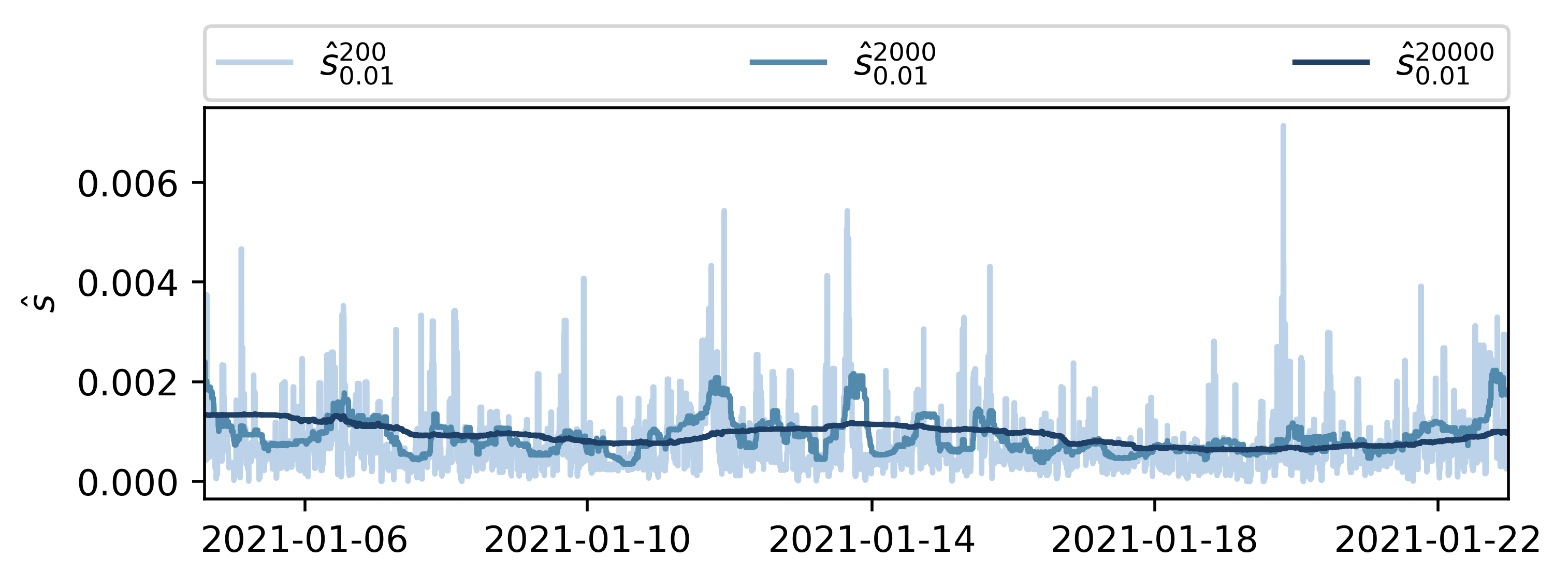}\vspace{-6pt}
        \caption{USDC$\rightleftharpoons$USDT} \label{fig:UU01}
    \end{subfigure}%
    \hfill
    \begin{subfigure}{0.47\linewidth}
        \includegraphics[width = \textwidth]{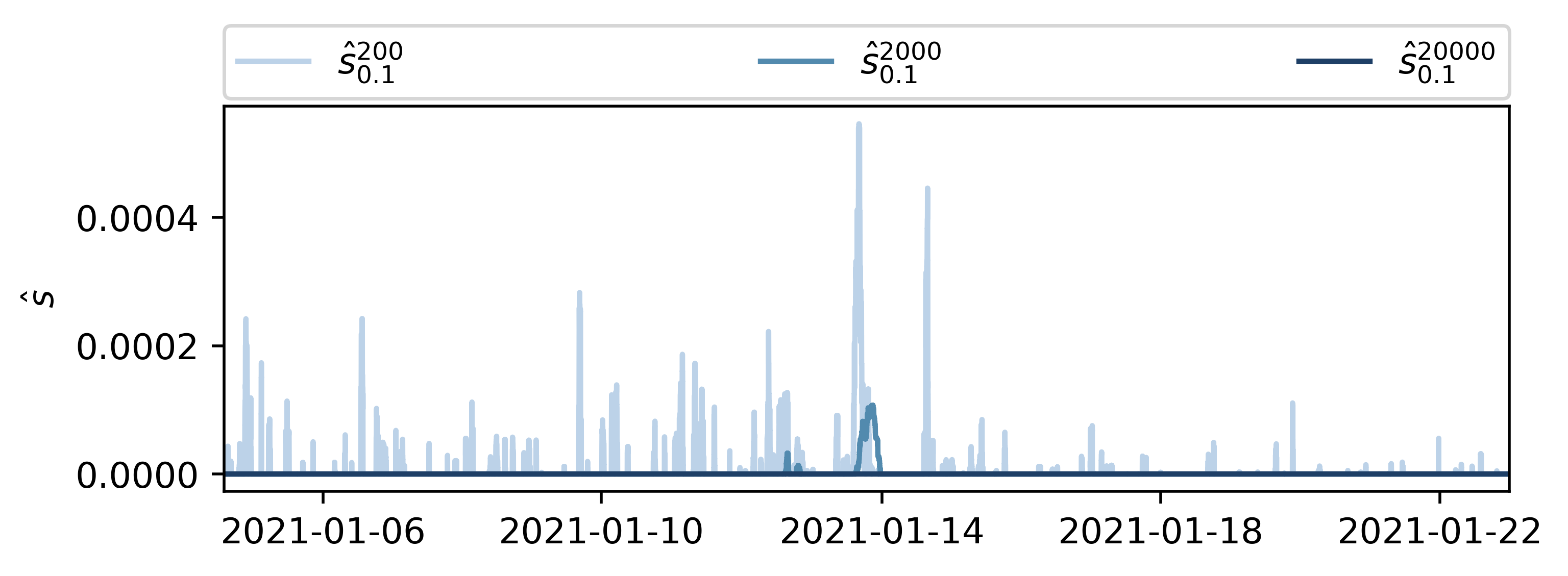}\vspace{-6pt}
        \caption{USDC$\rightleftharpoons$USDT} \label{fig:UU1}
    \end{subfigure}%
    
    \begin{subfigure}{0.47\linewidth}
        \includegraphics[width = \textwidth]{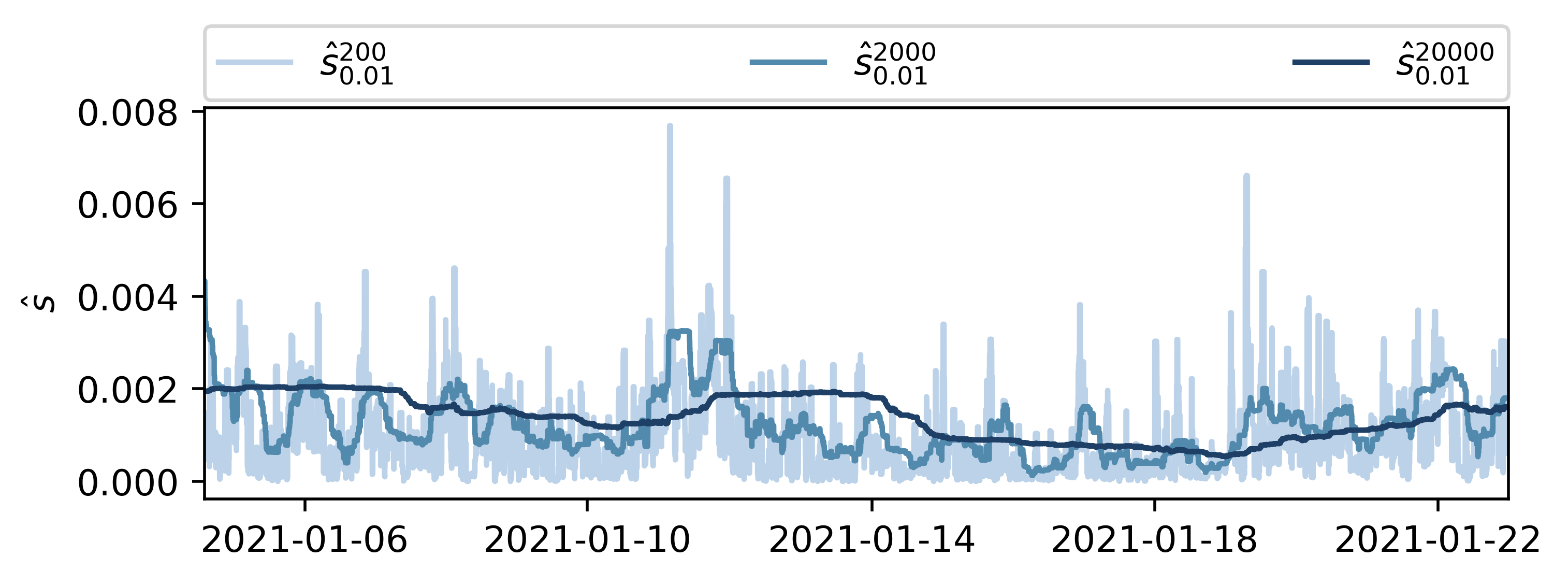}\vspace{-6pt}
        \caption{BTC$\rightleftharpoons$ETH} \label{fig:BU01}
    \end{subfigure}%
    \hfill
    \begin{subfigure}{0.47\linewidth}
        \includegraphics[width = \textwidth]{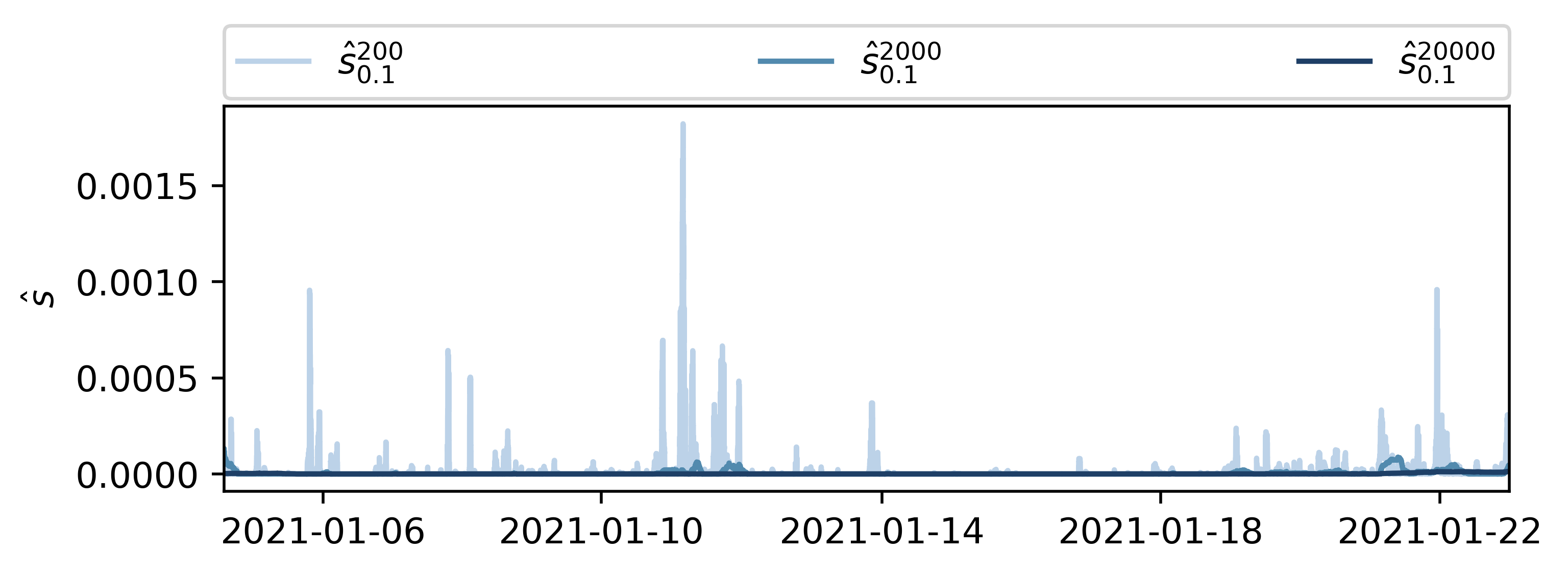}\vspace{-6pt}
        \caption{BTC$\rightleftharpoons$ETH} \label{fig:BU1}
    \end{subfigure}%
    
    \begin{subfigure}{0.47\linewidth}
        \includegraphics[width = \textwidth]{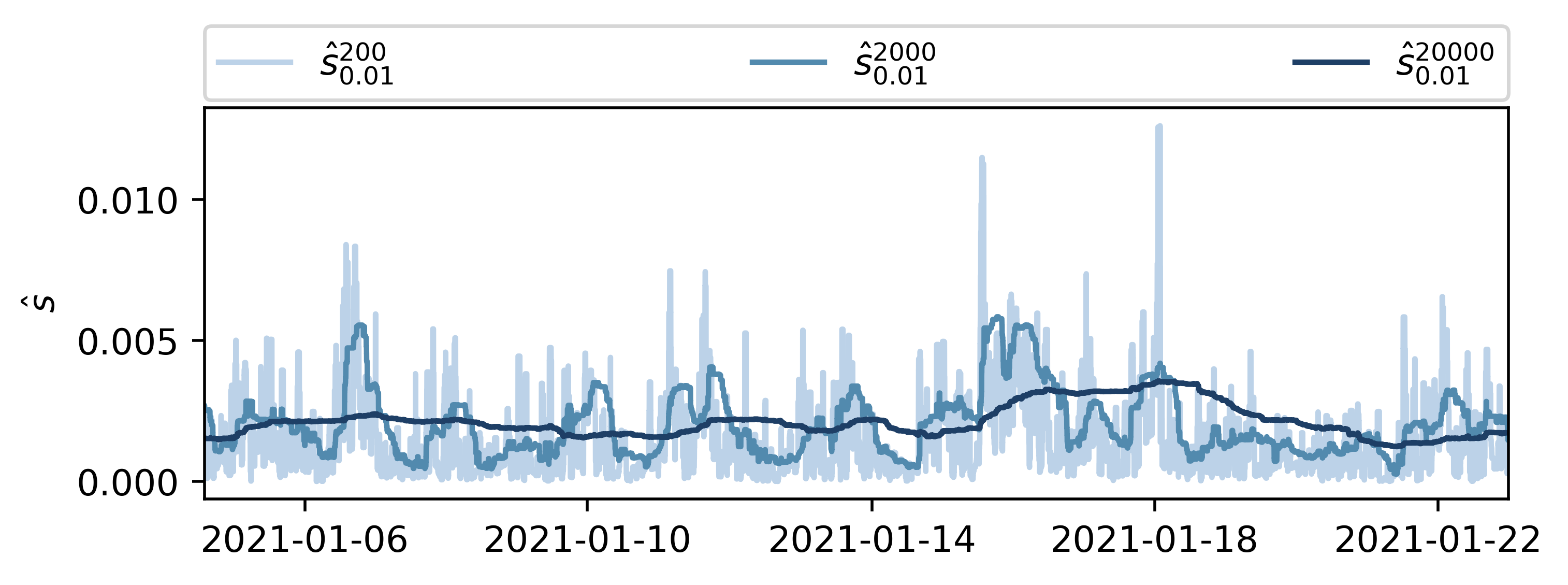}\vspace{-6pt}
        \caption{DPI$\rightleftharpoons$ETH} \label{fig:DE01}
    \end{subfigure}%
    \hfill
    \begin{subfigure}{0.47\linewidth}
        \includegraphics[width = \textwidth]{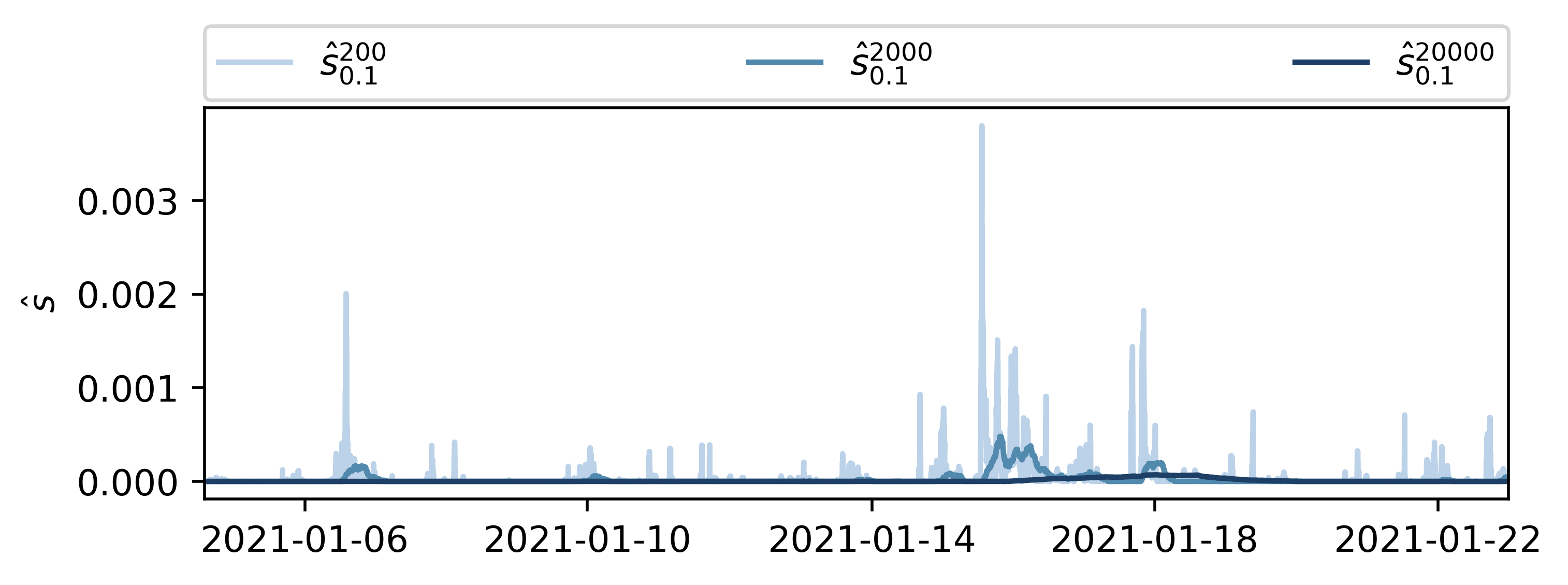}\vspace{-6pt}
        \caption{DPI$\rightleftharpoons$ETH} \label{fig:DE1}
    \end{subfigure}\vspace{-10pt}
\caption{Required slippage prediction $\left(\hat{s}_{p(s)}^w\right)$ for transaction failure probabilities $p(s)\in[0.01,0.05,0.1]$ and window sizes $w\in [200,2000,20000]$ for four Uniswap pools: USDC$\rightleftharpoons$WETH, USDC$\rightleftharpoons$USDT,  WBTC$\rightleftharpoons$WETH and DPI$\rightleftharpoons$WETH. We predict the required slippage tolerance over 120,000 blocks from block 11589848 to block 11709847.}\label{fig:prediction} 
\vspace{-10pt}
\end{figure*}

While the approximation is largely inaccurate for the smallest window size ($w=200$), it is accurate for all larger window sizes. Only for the largest tested slippage tolerance (cf. Table~\ref{tab:acc3}) does the prediction become inaccurate. However, this stems from the probability of transaction failure $p(s)=0.1$ being large enough, such that in less than a fraction of $p(s)$ blocks, the fractional price change is positive. Consequently, the estimation $\hat{s}_{p(s)}^w$ becomes zero. This is true for three of the tested pools: WBTC$\rightleftharpoons$USDC (cf. Figure~\ref{fig:BU1}), USDC$\rightleftharpoons$USDT (cf. Figure~\ref{fig:UU1}) and DPI$\rightleftharpoons$WETH (cf. Figure~\ref{fig:DE1}), and caused by low volume in the pools. If no trade is executed in the pool during a block, the required slippage tolerance is inevitably zero. Only for the most active pool (USDC$\rightleftharpoons$WETH), does prediction  $\hat{s}_{p(s)}^w$ remain accurate for $p(s) =0.1$ (cf. Figure~\ref{fig:UW1}). However, as these inaccuracies cause the slippage tolerance to be over-estimated rather than under-estimated, they do not cause unnecessary failures. In the following, we will use $w = 2000$ as a window size for the estimation. The estimation does not become noticeably more accurate for larger window sizes, and using  $w = 2000$  allows the system to react to changes more quickly.
\subsection{Setting Slippage}\label{sec:settings}

\begin{figure}[htpb]
\centering
    \begin{subfigure}{\linewidth}
        \includegraphics[width = 0.98\textwidth]{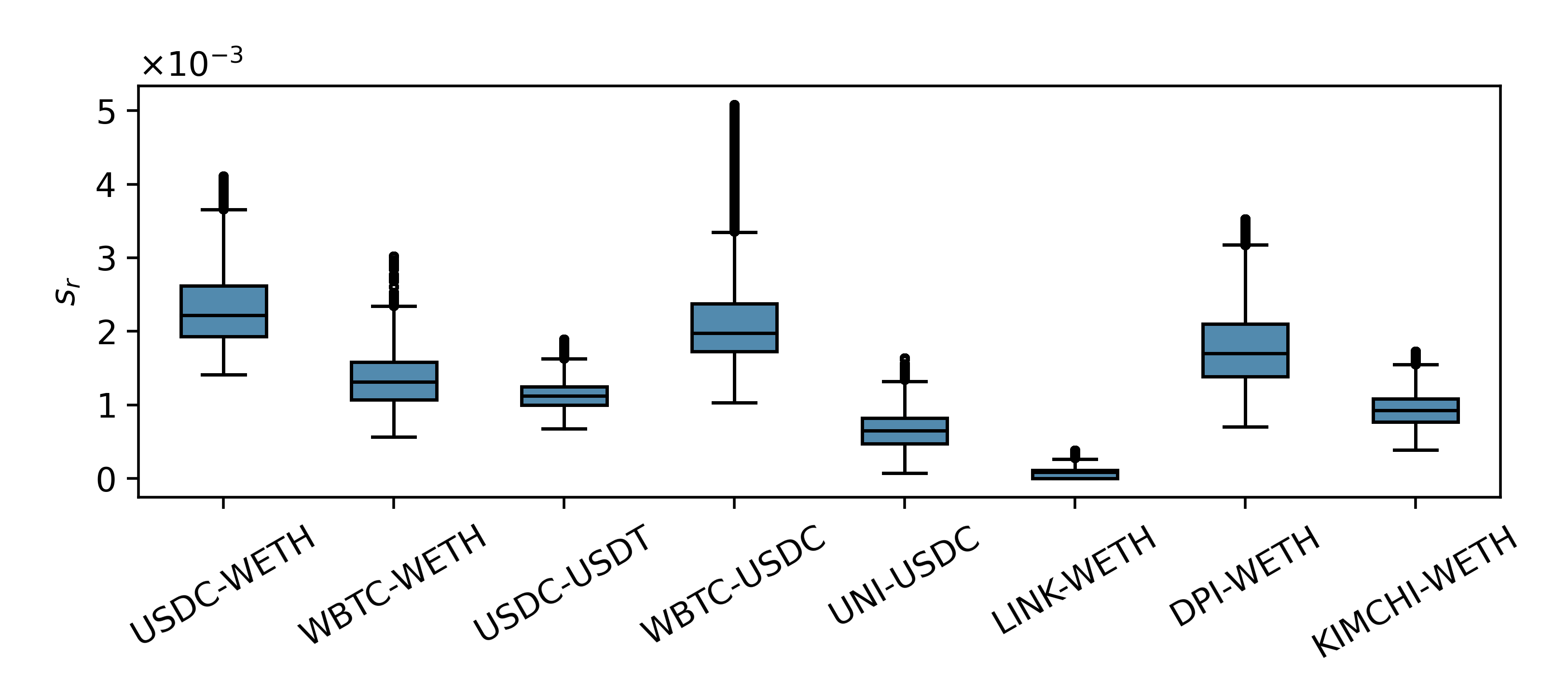}\vspace{-10pt}
        \caption{transaction size: $\delta_{v_y}=\$10$} \label{fig:10s}
    \end{subfigure}
    \begin{subfigure}{\linewidth}
        \includegraphics[width = 0.98\textwidth]{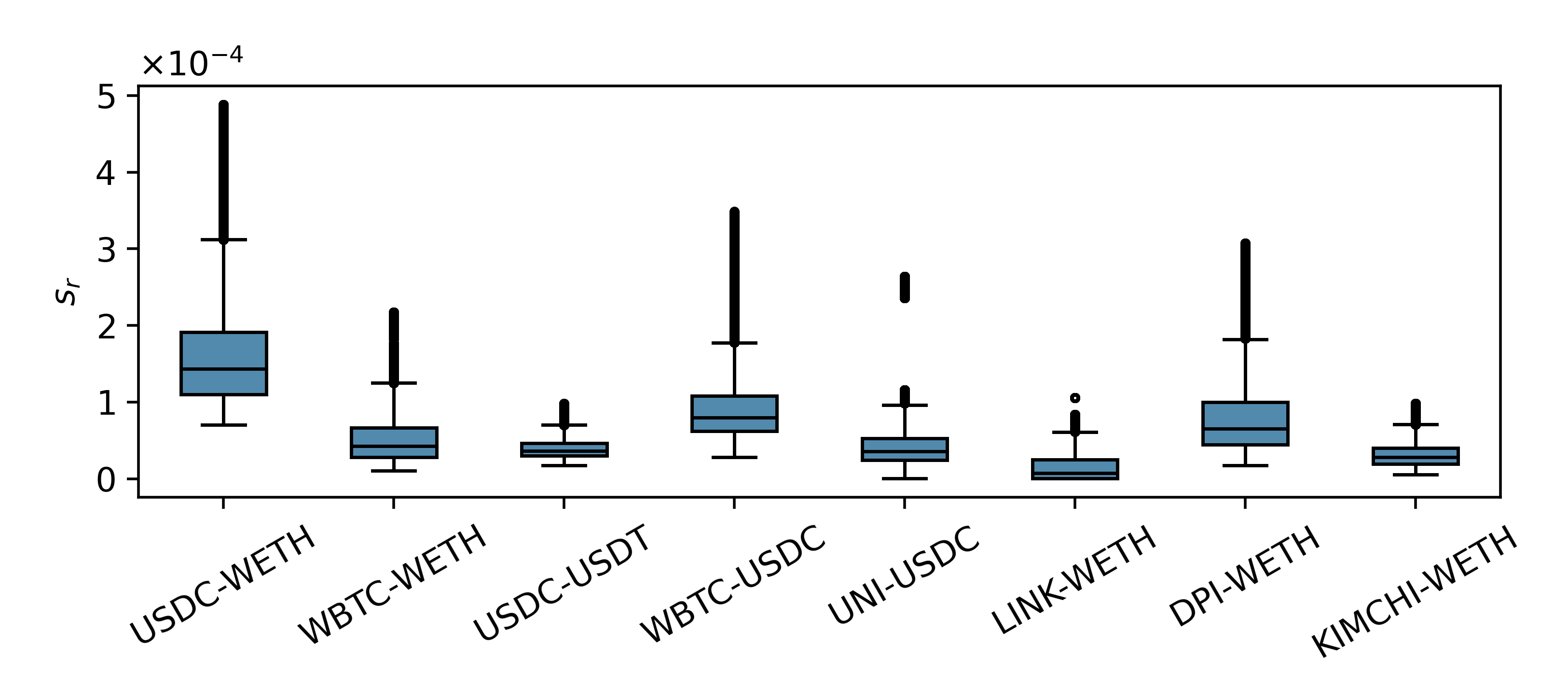}\vspace{-10pt}
        \caption{transaction size: $\delta_{v_y}=\$100000$} \label{fig:100000s}
    \end{subfigure}\vspace{-6pt}
\caption{We compute the lower bound for the slippage tolerance ($s_r$) (Algorithm~\ref{alg:slip}) over 120,000 blocks from block 11589848 to block 11709847. $s_r$ adapts to current pool characteristics and we, thus, record different values for every block which we show as a blue boxplot for pools: USDC$\rightleftharpoons$WETH, USDC$\rightleftharpoons$USTD, WBTC$\rightleftharpoons$WETH, DPI$\rightleftharpoons$WETH, WBTC$\rightleftharpoons$USDC,
UNI$\rightleftharpoons$USDC, LINK$\rightleftharpoons$WETH, and KIMCHI$\rightleftharpoons$WETH.}\label{fig:sets}
\vspace{-10pt}
\end{figure}
With the ability to predict the required slippage tolerance, we continue by calculating the slippage tolerance's lower bound, ensuring that the expected cost of transaction failure does not exceed the cost of a sandwich attack. We find this lower bound for trades of sizes: \$10, \$100, \$1000, \$10000, and \$100000 using Algorithm~\ref{alg:slip} for each block in our data set. In the following evaluation, we set the base fee to \$4. A base fee of \$4 for a Uniswap V2 transaction is in line with current values~\cite{2021burn}. We repeat our evaluation with different base fees (\$2 and \$8) in Appendix~\ref{app:base}. Over 120,000 blocks, we compute the lower bound for the slippage tolerance ($s_r$)  for the eight analyzed pools. As $s_r$ adapts to the current pool characteristics, we compute different values for every block. We visualize the results as a box plot for two trade sizes (\$10 and \$100000) in Figure~\ref{fig:sets}.

Note that even though the trade size differs by a factor of 10000, $s_r$ only decreases by a factor of 10 (cf. Figures~\ref{fig:10s} and~\ref{fig:100000s}). Further, we note that for both trade sizes, we observe a similar pattern between pools. $s_r$ tends to be smaller for pools with lower volume such as LINK$\rightleftharpoons$WETH and is largest for USDC$\rightleftharpoons$WETH, the biggest pool in terms of volume. This trend might be counter-intuitive initially, as we would expect prices of these, generally more exotic, cryptocurrencies in lower volume pools to fluctuate more. However, while this might be true for larger time frames, e.g., days, this is not true in the time-scale of blocks. Due to the low trading volume in the pools, there are many blocks without any trade execution. Thus, there are no price fluctuations between these blocks. We also see that $s_r$ differs within pools across time. For instance, we observe that $s_r$ varies by a factor of more than five for USDC$\rightleftharpoons$WETH for $\delta_{v_y}=\$100000$. Pools go through periods of both lower and higher volume. Therefore, it is natural that the expected fractional price change between two blocks also varies over time. Observing the difference of $s_r$ within and across pools indicates that the constant auto-slippage, as suggested by several AMMs, cannot be suitable for all trades. We will further underscore this point in the following with a comparison of the slippage tolerances computed by Algorithm~\ref{alg:slip} and Uniswap's constant auto-slippage (cf. Figure~\ref{fig:compares}).

\begin{figure}[htpb]
\centering
    \begin{subfigure}{\linewidth}
        \includegraphics[width = \textwidth]{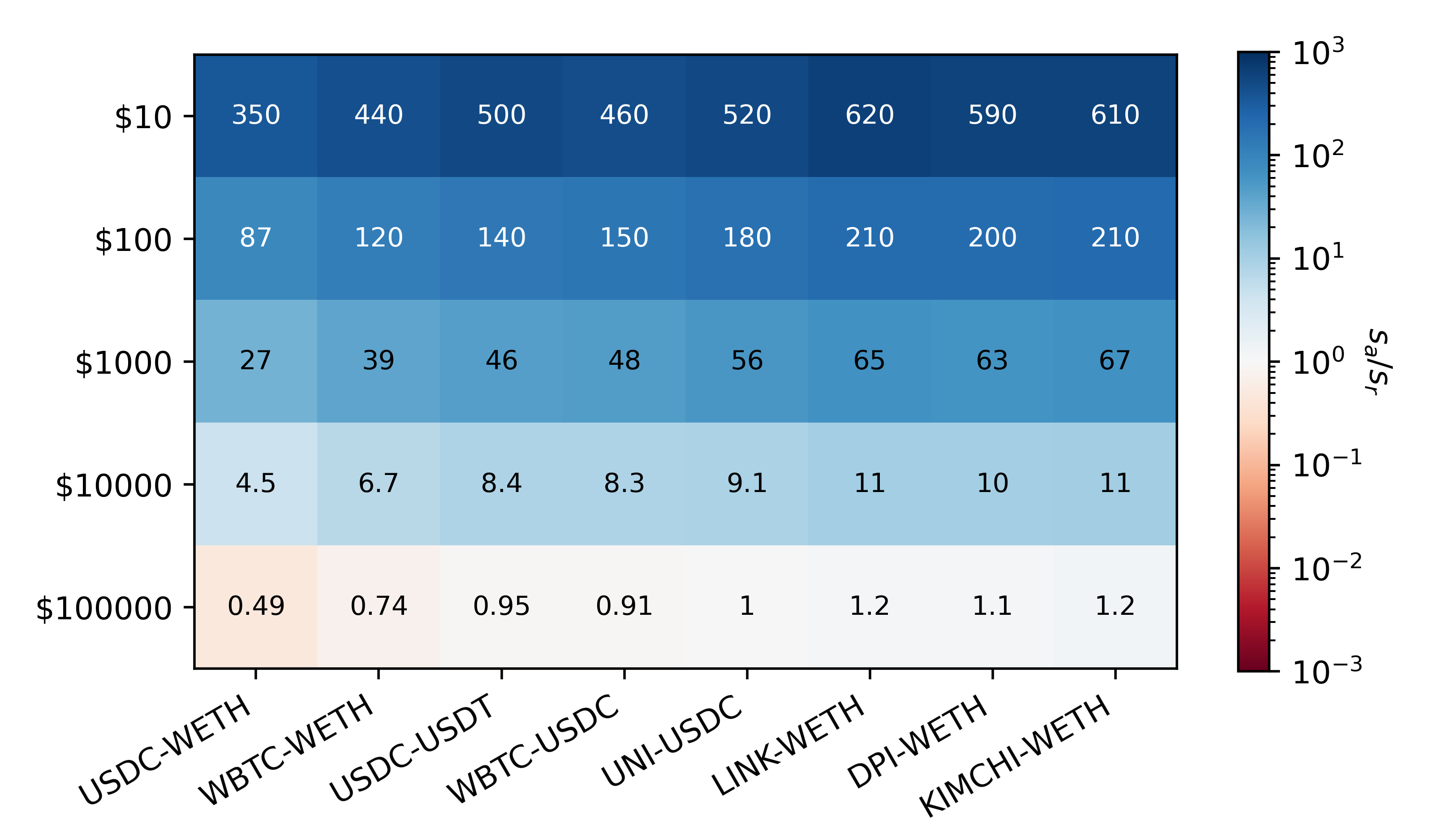}\vspace{-6pt}
        \caption{Comparison between the slippage tolerance at which trades become attackable ($s_a$) and the mean of the lower bound for the slippage tolerance such that the expected costs of transaction failure does not exceed the cost of a sandwich attack ($s_r$). Values larger than 1 suggest that sandwich attacks can be avoided ($s=s_a$ in Algorithm~\ref{alg:slip}), while values smaller than 1 indicate that sandwich attacks cannot be avoided easily ($s=s_r$ in Algorithm~\ref{alg:slip}).} \label{fig:sasr}
    \end{subfigure}%

    \begin{subfigure}{\linewidth}
        \includegraphics[width = \textwidth]{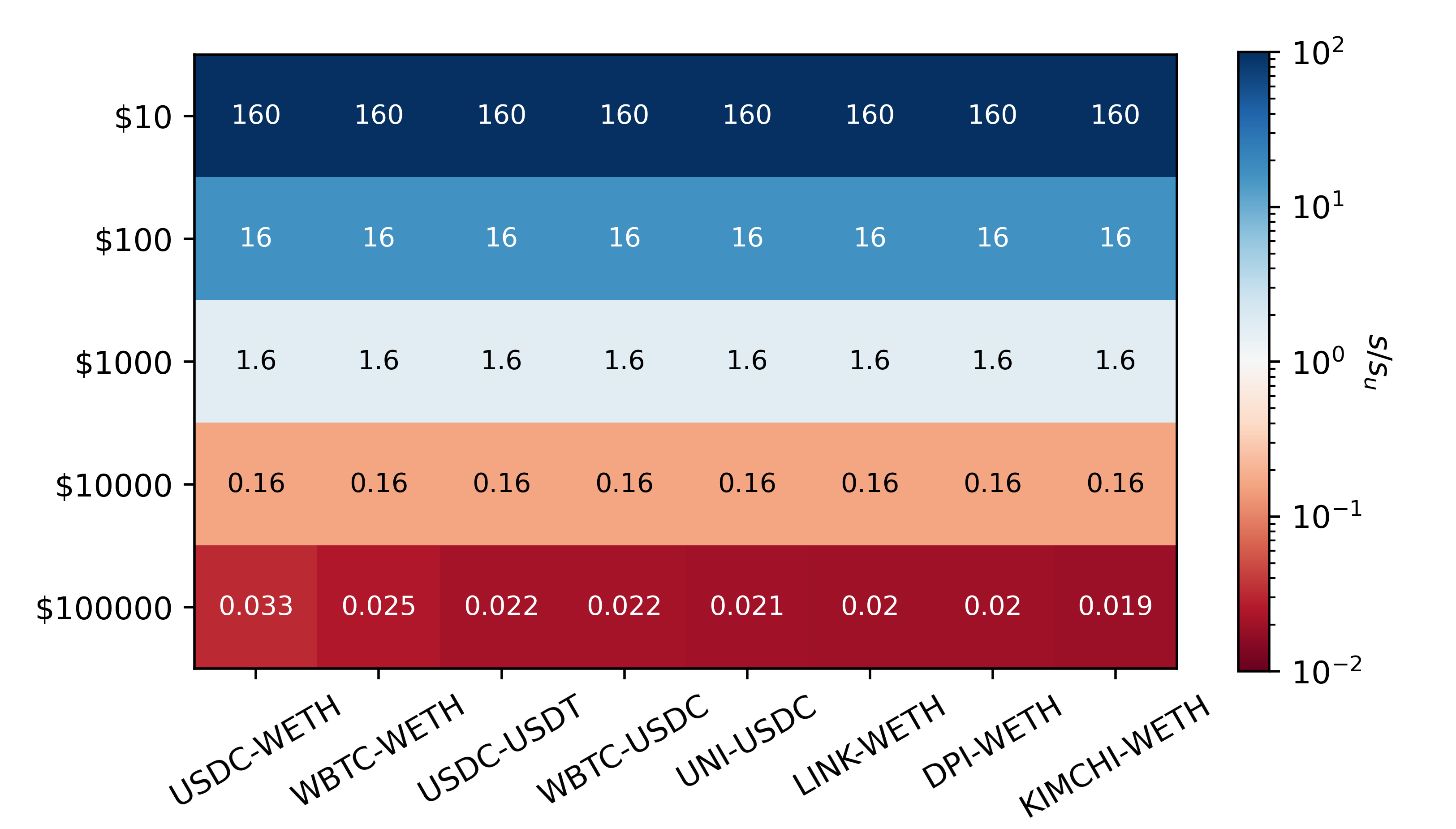}\vspace{-6pt}
        \caption{Comparison between the slippage tolerance chosen by Algorithm~\ref{alg:slip} ($s$) and the auto-slippage suggested by Uniswap ($s_u$). Values larger than 1 suggest that Uniswap auto-slippage is to low thereby leads to unnecessary trade failures. On the other hand, values larger than 1 indicate that the auto-slippage suggested by Uniswap is too high and unnecessary sandwich attacks occur.} \label{fig:oursu}
    \end{subfigure}\vspace{-4pt}
\caption{Slippage tolerance for pools: USDC$\rightleftharpoons$WETH, USDC$\rightleftharpoons$USTD, WBTC$\rightleftharpoons$WETH, DPI$\rightleftharpoons$WETH, WBTC$\rightleftharpoons$ USDC, UNI$\rightleftharpoons$USDC, LINK$\rightleftharpoons$WETH, and KIMCHI$\rightleftharpoons$WETH and trade sizes: \$10, \$100, \$1000, \$10000 and \$100000.}\label{fig:compares}
\vspace{-14pt}
\end{figure}

In Figure~\ref{fig:sasr} we compare $s_a$ and $s_r$. We find that the mean value of $s_r$ does not exceed $s_a$ for all transaction sizes analyzed up to \$10000. Note that when looking at the entire data set,  $s_r$ never exceeds $s_a$ for these transaction sizes. Thus, for all these transactions, the slippage tolerance can easily be set to $s=s_a-\varepsilon$, $\varepsilon\rightarrow 0^+$, to avoid being attacked and ensure that the costs related to potentially having to redo the transaction are small. Only for the largest transaction size does $s_r$ occasionally exceed $s_a$. The mean value of $s_r$ exceeds $s_a$ in half the pools and in Figure~\ref{fig:100000s} we see that there is at least one block for all pools in which $s_r$ surpasses $s_a$. Thus, when the trade size exceeds \$100000, sandwich attacks are not (always) avoidable with our parameter configuration. 

We turn to Figure~\ref{fig:oursu}, where we compare the slippage tolerance chosen by Algorithm~\ref{alg:slip} ($s$) to the slippage tolerance recommended by Uniswap ($s_\text{u}$). We show in blue where $s_u$ is smaller than $s$ and in red where $s_u$ exceeds $s$. For small trade sizes, $s_u$ is comparatively small. This unnecessarily small slippage tolerance is up to a factor of 160 smaller than the slippage tolerance at which the trade becomes attackable ($s_a$) and causes easily preventable transaction failures. We notice that independent of the transaction size, Uniswap's interface warns users that their transaction may be front-run when setting the slippage tolerance slightly below $s_a$ for trades of size \$10 and \$100. As $s_a$ specifies the slippage tolerance at which trades become attackable, they cannot be front-run profitably. Thus, the warning is misleading and can cause to unnecessary transaction failures. We note that while $s_a$ depends on the current base fee, Uniswap's warnings are fixed and independent of trade size, pool, and base fee. Thus, it suffices to test the Uniswap interface with realistic base fees. 

Simultaneously, for large trades, $s_\text{u}$ exceeds $s$ by up to a factor of more than 50, and thus opens greater parts of the transaction up for attacks than necessary. For example, when setting the slippage tolerance as indicated by our algorithm for trades of size \$10000, Uniswap warns that the transaction may fail and suggests users use a higher slippage tolerance. While not necessarily incorrect, any transaction may fail, the warning might encourage users to choose a higher slippage tolerance. Consequently, these users would encounter excess costs, as we will show in the subsequent section.

\subsection{Cost Comparison}\label{sec:cost}
To conclude the analysis, we simulate trades of sizes \$10, \$100, \$1000, \$10000 and \$100000 in every block between blocks  11589848 and 11709847 across all eight pools. We simulate all trades both with the slippage tolerance as specified by Algorithm~\ref{alg:slip} and with the slippage tolerance suggested by Uniswap. We note that we consider a trade $T_v=(\delta_{v_x},s,f,b,x_{0},y_{0},{t_0})$ to be attackable, whenever $s\delta_{v_y}\geq2b$ in accordance with Theorem~\ref{thm:profit}. 

We summarize the results of the simulation in Table~\ref{tab:costsaved}, where we show the fractional cost incurred when using our algorithm and the cost incurred when using the slippage tolerance recommended by Uniswap. This cost consists of both of the cost incurred from sandwich attacks and of the costs involved in resubmitting failed transactions. We further provide the detailed results on the number of times transactions fail to execute and suffer sandwich attacks in Appendix~\ref{app:details}. 

Our algorithm is significantly more cost-effective than the suggestions from Uniswap for all analyzed trade sizes in all analyzed pools. We notice that across all pools, very small trades experience no additional costs in our case but fail from time to time with Uniswap's suggested slippage tolerance. As we saw in Figure~\ref{fig:oursu}, the transactions fail as Uniswap's constant slippage tolerance is unnecessarily low for small trade sizes and leads to easily avoidable trade failures. While trades of size \$10 are never attacked nor fail when utilizing our algorithm for setting the slippage tolerance, a couple of transactions always fail when using Uniswap's slippage tolerance suggestion -- leading to an infinite cost reduction.

While our protocol for setting the slippage tolerance still saves costs in comparison to the auto-slippage across all pools, we find the smallest difference in costs for trades of size \$1000. We infer that the auto-slippage selected by Uniswap appears reasonable for transactions of size \$1000 when the base fee is \$4. This finding is in line with our observations from Figure~\ref{fig:oursu}, $s_u$ is closest to the slippage tolerance suggested by Algorithm~\ref{alg:slip} for trades of size \$1000.

\nprounddigits{3}
\newcolumntype{P}[1]{>{\centering\arraybackslash}p{#1}}
\begin{table*}[htpb]
\centering
\begin{subtable}{0.49\linewidth}
\centering
\begin{tabular}{@{}rn{1}{3}n{1}{3}rr@{}}
\toprule
\textbf{size [\$]} &   \multicolumn{1}{@{}P{2cm}@{}}{\textbf{fractional cost ours}} &  \multicolumn{1}{@{}P{1.9cm}@{}}{\textbf{fractional cost UNI}} &\multicolumn{1}{@{}P{1.8cm}@{}}{\textbf{ratio cost UNI/ours}} \\
\midrule
    10 & 0.0000 & 2.2670E-4 &    $\infty$\\
   100 & 0.0000 & 3.5450E-5 &    $\infty$\\
  1000 & 3.5548E-6 & 1.6325E-5 &  4.5924 \\
 10000 & 1.4347E-4 & 5.1037E-3 & 35.5718 \\
100000 & 3.1785E-4 & 5.0137E-3 & 15.7735 \\
\bottomrule
\end{tabular}
\caption{USDC$\rightleftharpoons$WETH}\label{tab:costsavedUE}
\end{subtable}
\hfill
\begin{subtable}{0.49\linewidth}
\centering
\begin{tabular}{@{}rn{1}{3}n{1}{3}rr@{}}
\toprule
\textbf{size [\$]} &   \multicolumn{1}{@{}P{2cm}@{}}{\textbf{fractional cost ours}} &  \multicolumn{1}{@{}P{1.9cm}@{}}{\textbf{fractional cost UNI}} &\multicolumn{1}{@{}P{1.8cm}@{}}{\textbf{ratio cost UNI/ours}} \\
\midrule
    10 & 0.0000 & 7.4405E-5 &     $\infty$\\
   100 & 2.4902E-6 & 1.5155E-5 &   6.0858 \\
  1000 & 5.8299E-6 & 9.2297E-6 &   1.5832 \\
 10000 & 4.1326E-5 & 5.1053E-3 & 123.5364 \\
100000 & 6.5758E-5 & 5.0153E-3 &  76.2684 \\
\bottomrule
\end{tabular}
\caption{WBTC$\rightleftharpoons$WETH}\label{tab:costsavedBE}
\end{subtable}

\begin{subtable}{0.49\linewidth}
\centering
\begin{tabular}{@{}rn{1}{3}n{1}{3}rr@{}}
\toprule
\textbf{size [\$]} &   \multicolumn{1}{@{}P{2cm}@{}}{\textbf{fractional cost ours}} &  \multicolumn{1}{@{}P{1.9cm}@{}}{\textbf{fractional cost UNI}} &\multicolumn{1}{@{}P{1.8cm}@{}}{\textbf{ratio cost UNI/ours}} \\
\midrule
    10 & 0.0000 & 8.3107E-5 &     $\infty$\\
   100 & 0.0000 & 1.3357E-5 &     $\infty$\\
  1000 & 2.0864E-6 & 6.3817E-6 &   3.0588 \\
 10000 & 2.6127E-5 & 5.1017E-3 & 195.2647 \\
100000 & 4.1509E-5 & 5.0117E-3 & 120.7390 \\
\bottomrule
\end{tabular}
\caption{USDC$\rightleftharpoons$USDT}\label{tab:costsavedUU}
\end{subtable}
\hfill
\begin{subtable}{0.49\linewidth}
\centering
\begin{tabular}{@{}rn{1}{3}n{1}{3}rr@{}}
\toprule
\textbf{size [\$]} &   \multicolumn{1}{@{}P{2cm}@{}}{\textbf{fractional cost ours}} &  \multicolumn{1}{@{}P{1.9cm}@{}}{\textbf{fractional cost UNI}} &\multicolumn{1}{@{}P{1.8cm}@{}}{\textbf{ratio cost UNI/ours}} \\
\midrule
    10 & 0.0000 & 8.0255E-4 &    $\infty$\\
   100 & 0.0000 & 1.3430E-4 &    $\infty$\\
  1000 & 3.4748E-5 & 6.7471E-5 &  1.9417 \\
 10000 & 9.7636E-5 & 5.1233E-3 & 52.4730 \\
100000 & 1.6187E-4 & 5.0325E-3 & 31.0901 \\
\bottomrule
\end{tabular}
\caption{WBTC$\rightleftharpoons$USDC}\label{tab:costsavedBU}
\end{subtable}

\begin{subtable}{0.49\linewidth}
\centering
\begin{tabular}{@{}rn{1}{3}n{1}{3}rr@{}}
\toprule
\textbf{size [\$]} &   \multicolumn{1}{@{}P{2cm}@{}}{\textbf{fractional cost ours}} &  \multicolumn{1}{@{}P{1.9cm}@{}}{\textbf{fractional cost UNI}} &\multicolumn{1}{@{}P{1.8cm}@{}}{\textbf{ratio cost UNI/ours}} \\
\midrule
    10 & 0.0000 & 3.2072E-4 &     $\infty$\\
   100 & 0.0000 & 7.6060E-5 &     $\infty$\\
  1000 & 4.7096E-5 & 5.1469E-5 &   1.0929 \\
 10000 & 5.0798E-5 & 5.1333E-3 & 101.0540 \\
100000 & 5.0975E-5 & 5.0363E-3 &  98.7995 \\
\bottomrule
\end{tabular}
\caption{UNI$\rightleftharpoons$USDC}\label{tab:costsavedUNU}
\end{subtable}
\hfill
\begin{subtable}{0.49\linewidth}
\centering
\begin{tabular}{@{}rn{1}{3}n{1}{3}rr@{}}
\toprule
\textbf{size [\$]} &   \multicolumn{1}{@{}P{2cm}@{}}{\textbf{fractional cost ours}} &  \multicolumn{1}{@{}P{1.9cm}@{}}{\textbf{fractional cost UNI}} &\multicolumn{1}{@{}P{1.8cm}@{}}{\textbf{ratio cost UNI/ours}} \\
\midrule
    10 & 0.0000 & 5.7070E-5 &     $\infty$\\
   100 & 4.4708E-6 & 2.0320E-5 &   4.5450 \\
  1000 & 1.6592E-5 & 1.6644E-5 &   1.0031 \\
 10000 & 1.6373E-5 & 5.1142E-3 & 312.3494 \\
100000 & 1.8341E-5 & 5.0241E-3 & 273.9272 \\
\bottomrule
\end{tabular}
\caption{LINK$\rightleftharpoons$WETH}\label{tab:costsavedLE}
\end{subtable}

\begin{subtable}{0.49\linewidth}
\centering
\begin{tabular}{@{}rn{1}{3}n{1}{3}rr@{}}
\toprule
\textbf{size [\$]} &   \multicolumn{1}{@{}P{2cm}@{}}{\textbf{fractional cost ours}} &  \multicolumn{1}{@{}P{1.9cm}@{}}{\textbf{fractional cost UNI}} &\multicolumn{1}{@{}P{1.8cm}@{}}{\textbf{ratio cost UNI/ours}} \\
\midrule
    10 & 0.0000 & 2.7638E-4 &    $\infty$\\
   100 & 2.4676E-6 & 4.6882E-5 & 18.9989 \\
  1000 & 9.2089E-6 & 2.3932E-5 &  2.5988 \\
 10000 & 7.2344E-5 & 5.1586E-3 & 71.3064 \\
100000 & 1.3238E-4 & 5.0237E-3 & 37.9494 \\
\bottomrule
\end{tabular}
\caption{DPI$\rightleftharpoons$WETH}\label{tab:costsavedDE}
\end{subtable}
\hfill
\begin{subtable}{0.49\linewidth}
\centering
\begin{tabular}{@{}rn{1}{3}n{1}{3}rr@{}}
\toprule
\textbf{size [\$]} &   \multicolumn{1}{@{}P{2cm}@{}}{\textbf{fractional cost ours}} &  \multicolumn{1}{@{}P{1.9cm}@{}}{\textbf{fractional cost UNI}} &\multicolumn{1}{@{}P{1.8cm}@{}}{\textbf{ratio cost UNI/ours}} \\
\midrule
    10 & 0.0000 & 2.7638E-4 &    $\infty$\\
   100 & 2.4676E-6 & 4.6882E-5 & 18.9989 \\
  1000 & 9.2089E-6 & 2.3932E-5 &  2.5988 \\
 10000 & 7.2321E-5 & 5.1087E-3 & 70.6393 \\
100000 & 1.3198E-4 & 5.0186E-3 & 38.0253 \\
\bottomrule
\end{tabular}
\caption{KIMCHI$\rightleftharpoons$WETH}\label{tab:costsavedKE}
\end{subtable}
\vspace{-6pt}
\caption{Cost comparison when using our own algorithm to set the slippage tolerance vs. the slippage tolerance suggested by Uniswap. The fractional cost includes both the costs of being attacked as well as the costs associated with redoing the transactions. The simulation spans over 120,000 blocks, from block 11589848 to block 11709847, and the base fee is set to \$4.}\label{tab:costsaved}
\vspace{-14pt}
\end{table*}

Finally, for large trades (\$10000 and \$100000), our algorithm consistently demonstrates a high cost reduction of up to a factor of 273. Looking at the results in further detail, we observe differing patterns for high volume pools such as USDC$\rightleftharpoons$WETH (cf. Table~\ref{tab:costsavedUE}) and WBTC$\rightleftharpoons$WETH (cf. Table~\ref{tab:costsavedBE}) and lower volume pools such as UNI$\rightleftharpoons$USDC (cf. Table~\ref{tab:costsavedUNU}) and LINK$\rightleftharpoons$WETH (cf. Table~\ref{tab:costsavedLE}). In high volume pools, the difference between the costs experienced by trades using our algorithm and Uniswap's auto-slippage decreases more starkly for large trades. Regardless of this decrease, the difference remains significant across all pools. In comparatively low volume pools, the cost ratio does not decrease noticeably for large trades. Low volume leads to smaller inter-block price movements: allowing our algorithm to select lower slippage tolerances and avoid sandwich attacks. Precisely, while 80\% of trades using our slippage tolerance algorithm were attacked for trades of size \$100000 in the USDC$\rightleftharpoons$WETH (cf. Table~\ref{tab:extraUE}), less than 3\% of trades were attacked in the low volume pool LINK$\rightleftharpoons$WETH (cf. Table~\ref{tab:extraLE}). 

Thus, we deduce that using a constant auto-slippage, as suggested by both Uniswap and SushiSwap, ignorant of the trade size and pool characteristics, imposes unreasonably high costs on trades. The inefficiency of the constant auto-slippage is highlighted by our algorithm repeatedly demonstrating a cost reduction of a three-figure factor. Further, we note that setting the slippage tolerance per our simple algorithm avoids sandwich attacks for all tested pools and transaction sizes smaller than \$100000. This success shows that contrary to common assumptions, traders can mostly avoid being sandwich attacked by setting the slippage tolerance. 

To conclude, we infer that in pools with smaller inter-block price movements, the additional costs traders need to face from the transaction ordering tax can be reduced significantly. In Uniswap V3, liquidity providers no longer automatically commit to providing liquidity for the entire price range but can choose to provide liquidity in a smaller price range~\cite{adams2021uniswap}. As a consequence, their liquidity is up to 4000 times more capitally efficient~\cite{2021uniintro}. Thus, we expect inter-block price movements to be even smaller, and our algorithm would allow traders to avoid the invisible tax even further.

\section{Related Work}
The prevalence of front-running on centralized exchanges is a well-studied area~\cite{BERNHARDT2008front,angel2011equity} and most types of front-running are outlawed in traditional markets~\cite{markham1988front,moosa2015regulation}. Still, there are legal trading strategies utilized by high-frequency trading (HTF) firms that front-run transactions for profit~\cite{harris2013what,scopino2014the}. 

Only with the introduction of Ethereum DApps has front-running become a pervasive issue on permissionless blockchains. Eskandir et al.~\cite{eskandari2019sok} are the first to combine the scattered body of knowledge of front-running on permissionless blockchains at the time. Seeing the effects of front-running on AMM users and the limited actions taken by the AMMs themselves, we offer them a simple way of protecting themselves against such attacks. 

Daian et al.~\cite{daian2020flash} present a study on \textit{price gas auctions} (PGA), analyzing various types of predatory trading behaviors known from traditional finance and adapting to DeFi. They further introduce \textit{miner-extractable value} (MEV) as a concept and empirically show its risks. MEV measures the profit miners can extract through either arbitrarily including or excluding transactions from blocks or re-ordering transactions within blocks. Subsequently, Qin et al.~\cite{qin2021quantifying} quantify the transaction ordering tax and provide evidence of miners already extracting MEV. In contrast, we focus specifically on sandwich attacks from both the victims' and bot's perspectives by introducing the sandwich game.  

Zhou et al.~\cite{zhou2021high} formalize the sandwich attack problem on AMM exchanges. They study the problem analytically and empirically from the attackers' perspective and quantify when profitable attacks exist. We generalize the analytical sandwich attack problem and include the victim perspective -- letting victims adjust the slippage tolerance to avoid sandwich attacks. Our analysis reveals that contrary to popular belief, victims can mitigate sandwich attacks in most cases.

A large-scale analysis of sandwich attacks is performed by Züst in~\cite{zust2021analyzing}: quantifying the frequency and profitability of sandwich attacks and showing that the number of bots performing sandwich attacks is becoming increasingly efficient. While Züst suggests splitting up large trades as a mitigation strategy, we demonstrate that it is generally sufficient for DeFi users to adjust their slippage tolerance to protect against sandwich attacks. 

Several solutions to blockchain front-running have been introduced recently. With Tesseract, Bentov et al.~\cite{bentov2019tesseract} introduce an exchange that relies on trusted hardware to resit front-running. Aequitas is a premissioned consensus protocol to achieve order-fairness by Kelkar et al.~\cite{kelkar2020order}. Cachin et al.~\cite{cachin2021quick} strengthen the fairness notion achieved by Aequitas. In contrast to these works, we show that sandwich attacks are preventable without the need for trusted hardware or premissioned consensus. Further, our approach allows users to protect themselves immediately without having to wait for the DeFi ecosystem to evolve.

\section{Conclusion}

Sandwich attacks are a constant threat to the transactions of traders on AMMs. In this work, we generalized the sandwich attack problem to include both traders and bots. Our model demonstrates that the constant auto-slippage suggested by most AMMs only performs well for a small set of trade parameters. Further, we highlight that, contrary to popular belief, traders can easily avoid most sandwich attacks. An adjustment of the slippage tolerance suffices in most cases and does not face an unnecessarily high risk of trade failure due to an insufficiently small slippage tolerance. The simple algorithm we present can be utilized by traders to protect themselves against sandwich attacks and outperforms the auto-slippage suggested by Uniswap in all tested settings -- demonstrating a three-figure factor cost reduction. We foresee the possibility that some more conservative traders prefer accepting the transaction ordering tax instead of accepting the small risk of transaction failure. However, this would open up the opportunity for AMMs themselves or a new DeFi service to guarantee a given (low) slippage tolerance to their users by amortizing the cost across a pool of users. 

While our simple approach is successful at avoiding sandwich attacks without incurring unnecessary costs and allows traders to protect themselves, it does not prevent other predatory trading behaviors leading to MEV. The development of an approach to prevent all predatory trading behaviors is, thus, an open question for future research.
\bibliographystyle{ACM-Reference-Format} 
\bibliography{ccs}

\newpage
\appendix
\section{Failed and Attacked Trades}\label{app:details}

\begin{table*}[!b]
\centering
\begin{subtable}{0.49\linewidth}
\centering
\begin{tabular}{@{}rrrrrrrr@{}}
\toprule
{} & \multicolumn{2}{@{}P{2.1cm}@{}}{\textbf{failed trades}} & \multicolumn{2}{@{}P{2.4cm}@{}}{\textbf{average failed attempts}} & \multicolumn{2}{@{}P{2.4cm}@{}}{\textbf{attacked trades}} \\
{} &       \multicolumn{1}{c}{ours} &   \multicolumn{1}{c}{UNI} &   \multicolumn{1}{c}{ours} & \multicolumn{1}{c}{UNI} &       \multicolumn{1}{c}{ours} &   \multicolumn{1}{c}{UNI} \\
\textbf{size [\$]}  &           &       &       &     &           &  \\
\midrule
10     &     0 & 253 &    0.0000 & 1.0079 &       0 &      0 \\
100    &     0 & 253 &    0.0000 & 1.0079 &       0 &      0 \\
1000   &    36 & 253 &    1.0000 & 1.0079 &       0 &      0 \\
10000  &  6814 & 253 &    1.1611 & 1.0079 &       0 & 119747 \\
100000 & 14697 & 253 &    1.2232 & 1.0079 &  101371 & 119747 \\
\bottomrule
\end{tabular}
\caption{USDC$\rightleftharpoons$WETH}\label{tab:extraUE}

\end{subtable}
\hfill
\begin{subtable}{0.49\linewidth}
\centering
\begin{tabular}{@{}rrrrrrrr@{}}
\toprule
{} & \multicolumn{2}{@{}P{2.1cm}@{}}{\textbf{failed trades}} & \multicolumn{2}{@{}P{2.4cm}@{}}{\textbf{average failed attempts}} & \multicolumn{2}{@{}P{2.4cm}@{}}{\textbf{attacked trades}} \\
{} &       \multicolumn{1}{c}{ours} &   \multicolumn{1}{c}{UNI} &   \multicolumn{1}{c}{ours} & \multicolumn{1}{c}{UNI} &       \multicolumn{1}{c}{ours} &   \multicolumn{1}{c}{UNI} \\
\textbf{size [\$]}  &           &       &       &     &           &  \\
\midrule
10     &     0 &  79 &    0.0000 & 1.0000 &       0 &      0 \\
100    &     3 &  79 &    1.0000 & 1.0000 &       0 &      0 \\
1000   &    21 &  79 &    1.0000 & 1.0000 &       0 &      0 \\
10000  &  1992 &  79 &    1.0658 & 1.0000 &       0 & 119921 \\
100000 &  5455 &  79 &    1.0948 & 1.0000 &   16026 & 119921 \\
\bottomrule
\end{tabular}
\caption{WBTC$\rightleftharpoons$WETH}
\end{subtable}
\begin{subtable}{0.49\linewidth}
\centering
\begin{tabular}{@{}rrrrrrrr@{}}
\toprule
{} & \multicolumn{2}{@{}P{2.1cm}@{}}{\textbf{failed trades}} & \multicolumn{2}{@{}P{2.4cm}@{}}{\textbf{average failed attempts}} & \multicolumn{2}{@{}P{2.4cm}@{}}{\textbf{attacked trades}} \\
{} &       \multicolumn{1}{c}{ours} &   \multicolumn{1}{c}{UNI} &   \multicolumn{1}{c}{ours} & \multicolumn{1}{c}{UNI} &       \multicolumn{1}{c}{ours} &   \multicolumn{1}{c}{UNI} \\
\textbf{size [\$]}  &           &       &       &     &           &  \\
\midrule
10     &     0 &  93 &    0.0000 & 1.0000 &       0 &      0 \\
100    &     0 &  93 &    0.0000 & 1.0000 &       0 &      0 \\
1000   &    20 &  93 &    1.0000 & 1.0000 &       0 &      0 \\
10000  &  1336 &  93 &    1.0352 & 1.0000 &       0 & 119907 \\
100000 &  5964 &  92 &    1.1160 & 1.0000 &    1536 & 119908 \\
\bottomrule
\end{tabular}
\caption{USDC$\rightleftharpoons$USDT}

\end{subtable}
\hfill
\begin{subtable}{0.49\linewidth}
\centering
\begin{tabular}{@{}rrrrrrrr@{}}
\toprule
{} & \multicolumn{2}{@{}P{2.1cm}@{}}{\textbf{failed trades}} & \multicolumn{2}{@{}P{2.4cm}@{}}{\textbf{average failed attempts}} & \multicolumn{2}{@{}P{2.4cm}@{}}{\textbf{attacked trades}} \\
{} &       \multicolumn{1}{c}{ours} &   \multicolumn{1}{c}{UNI} &   \multicolumn{1}{c}{ours} & \multicolumn{1}{c}{UNI} &       \multicolumn{1}{c}{ours} &   \multicolumn{1}{c}{UNI} \\
\textbf{size [\$]}  &           &       &       &     &           &  \\
\midrule
10     &     0 & 881 &    0.0000 & 1.0114 &       0 &      0 \\
100    &     0 & 881 &    0.0000 & 1.0114 &       0 &      0 \\
1000   &   345 & 881 &    1.0087 & 1.0114 &       0 &      0 \\
10000  &  2450 & 880 &    1.0351 & 1.0114 &       0 & 119120 \\
100000 &  3189 & 858 &    1.0442 & 1.0105 &   57470 & 119142 \\
\bottomrule
\end{tabular}
\caption{WBTC$\rightleftharpoons$USDC}
\end{subtable}
\begin{subtable}{0.49\linewidth}
\centering
\begin{tabular}{@{}rrrrrrrr@{}}
\toprule
{} & \multicolumn{2}{@{}P{2.1cm}@{}}{\textbf{failed trades}} & \multicolumn{2}{@{}P{2.4cm}@{}}{\textbf{average failed attempts}} & \multicolumn{2}{@{}P{2.4cm}@{}}{\textbf{attacked trades}} \\
{} &       \multicolumn{1}{c}{ours} &   \multicolumn{1}{c}{UNI} &   \multicolumn{1}{c}{ours} & \multicolumn{1}{c}{UNI} &       \multicolumn{1}{c}{ours} &   \multicolumn{1}{c}{UNI} \\
\textbf{size [\$]}  &           &       &       &     &           &  \\
\midrule
10     &     0 & 325 &    0.0000 & 1.0031 &       0 &      0 \\
100    &     0 & 324 &    0.0000 & 1.0031 &       0 &      0 \\
1000   &   252 & 325 &    1.0040 & 1.0031 &       0 &      0 \\
10000  &   482 & 317 &    1.0041 & 1.0032 &       0 & 119683 \\
100000 &   542 & 277 &    1.0037 & 1.0036 &    7872 & 119723 \\
\bottomrule
\end{tabular}
\caption{UNI$\rightleftharpoons$USDC}

\end{subtable}
\hfill
\begin{subtable}{0.49\linewidth}
\centering
\begin{tabular}{@{}rrrrrrrr@{}}
\toprule
{} & \multicolumn{2}{@{}P{2.1cm}@{}}{\textbf{failed trades}} & \multicolumn{2}{@{}P{2.4cm}@{}}{\textbf{average failed attempts}} & \multicolumn{2}{@{}P{2.4cm}@{}}{\textbf{attacked trades}} \\
{} &       \multicolumn{1}{c}{ours} &   \multicolumn{1}{c}{UNI} &   \multicolumn{1}{c}{ours} & \multicolumn{1}{c}{UNI} &       \multicolumn{1}{c}{ours} &   \multicolumn{1}{c}{UNI} \\
\textbf{size [\$]}  &           &       &       &     &           &  \\
\midrule
10     &     0 &  49 &    0.0000 & 1.0000 &       0 &      0 \\
100    &     5 &  49 &    1.0000 & 1.0000 &       0 &      0 \\
1000   &    48 &  49 &    1.0000 & 1.0000 &       0 &      0 \\
10000  &    52 &  49 &    1.0000 & 1.0000 &       0 & 119951 \\
100000 &    52 &  49 &    1.0000 & 1.0000 &    3080 & 119951 \\
\bottomrule
\end{tabular}
\caption{LINK$\rightleftharpoons$WETH}\label{tab:extraLE}

\end{subtable}
\begin{subtable}{0.49\linewidth}
\centering
\begin{tabular}{@{}rrrrrrrr@{}}
\toprule
{} & \multicolumn{2}{@{}P{2.1cm}@{}}{\textbf{failed trades}} & \multicolumn{2}{@{}P{2.4cm}@{}}{\textbf{average failed attempts}} & \multicolumn{2}{@{}P{2.4cm}@{}}{\textbf{attacked trades}} \\
{} &       \multicolumn{1}{c}{ours} &   \multicolumn{1}{c}{UNI} &   \multicolumn{1}{c}{ours} & \multicolumn{1}{c}{UNI} &       \multicolumn{1}{c}{ours} &   \multicolumn{1}{c}{UNI} \\
\textbf{size [\$]}  &           &       &       &     &           &  \\
\midrule
10     &     0 & 305 &    0.0000 & 1.0033 &       0 &      0 \\
100    &     3 & 305 &    1.0000 & 1.0033 &       0 &      0 \\
1000   &    56 & 305 &    1.0000 & 1.0033 &       0 &      0 \\
10000  &  2720 & 305 &    1.1162 & 1.0033 &       0 & 119695 \\
100000 &  6701 & 304 &    1.1310 & 1.0033 &   41511 & 119696 \\
\bottomrule
\end{tabular}
\caption{DPI$\rightleftharpoons$WETH}

\end{subtable}
\hfill
\begin{subtable}{0.49\linewidth}
\centering
\begin{tabular}{@{}rrrrrrrr@{}}
\toprule
{} & \multicolumn{2}{@{}P{2.1cm}@{}}{\textbf{failed trades}} & \multicolumn{2}{@{}P{2.4cm}@{}}{\textbf{average failed attempts}} & \multicolumn{2}{@{}P{2.4cm}@{}}{\textbf{attacked trades}} \\
{} &       \multicolumn{1}{c}{ours} &   \multicolumn{1}{c}{UNI} &   \multicolumn{1}{c}{ours} & \multicolumn{1}{c}{UNI} &       \multicolumn{1}{c}{ours} &   \multicolumn{1}{c}{UNI} \\
\textbf{size [\$]}  &           &       &       &     &           &  \\
\midrule
10     &     0 & 183 &    0.0000 & 1.0055 &       0 &      0 \\
100    &     3 & 183 &    1.0000 & 1.0055 &       0 &      0 \\
1000   &    57 & 183 &    1.0000 & 1.0055 &       0 &      0 \\
10000  &   815 & 183 &    1.0049 & 1.0055 &       0 & 119817 \\
100000 &  2591 & 183 &    1.0243 & 1.0055 &    1706 & 119817 \\
\bottomrule
\end{tabular}
\caption{KIMCHI$\rightleftharpoons$WETH}\label{tab:extraKE}
\end{subtable}\vspace{-6pt}
\caption{Comparison between the number of failed and attacked trades when using our algorithm to set the slippage tolerance vs. the slippage tolerance suggested by Uniswap. The simulation spans between blocks 11589848 and 1170984. The base fee is set to \$4.}\label{tab:fails}\vspace{-14pt}
\end{table*}
Table~\ref{tab:fails} compares the number of failed and attacked trades when using our slippage tolerance setting algorithm and Uniswap's suggested slippage tolerance. It is apparent while the constant slippage tolerance used by Uniswap cannot be effective for both different trade patterns and pools. Our simple slippage tolerance algorithm, on the other hand, adjusts well to the varying conditions. We notice that Uniswap's auto-slippage leads to a similar number of failed trades caused by an insufficient slippage tolerance for all tested trade sizes in a pool. Especially for smaller trade sizes, these failures are unnecessary, as we see when comparing the performance of the auto-slippage to that chosen by our slippage tolerance algorithm. For trade sizes up to a \$1000,  our algorithm can successfully avoid the vast majority of trade failures in all pools, and simultaneously not a single trade suffers a sandwich attack. All large (\$10000 and \$100000) trades with Uniswap's auto-slippage are sandwich attacked or failed to execute. Our algorithm, on the other hand, avoids all sandwich attacks for trades up to size \$1000, while at the same time only experiencing a few, at most 5.5\% for trades of size \$10000 in the USDC$\rightleftharpoons$WETH pool (cf. Table~\ref{tab:extraUE}), trade failures. For the largest trade size (\$100000), our algorithm cannot avoid all sandwich attacks. Still, trades are only attacked very rarely in comparison to those that use Uniswap's auto-slippage.
\section{Cost Comparison \texorpdfstring{($b=\$2$ and $b=\$8$)}{} }\label{app:base}

\nprounddigits{3}
\begin{table*}[!b]
\centering
\begin{subtable}{0.49\linewidth}
\centering
\begin{tabular}{@{}rn{1}{3}n{1}{3}rr@{}}
\toprule
\textbf{size [\$]} &   \multicolumn{1}{@{}P{2cm}@{}}{\textbf{fractional cost ours}} &  \multicolumn{1}{@{}P{1.9cm}@{}}{\textbf{fractional cost UNI}} &\multicolumn{1}{@{}P{1.8cm}@{}}{\textbf{ratio cost UNI/ours}} \\
\midrule
    10 & 0.0000 & 2.2670E-4 &     $\infty$\\
   100 & 0.0000 & 3.5450E-5 &     $\infty$\\
  1000 & 2.6521E-5 & 5.5047E-3 & 207.5644 \\
 10000 & 1.7543E-4 & 5.0538E-3 &  28.8085 \\
100000 & 3.2008E-4 & 5.0087E-3 &  15.6483 \\
\bottomrule
\end{tabular}
\caption{USDC$\rightleftharpoons$WETH}\label{tab:costsavedUE2}
\end{subtable}
\hfill
\begin{subtable}{0.49\linewidth}
\centering
\begin{tabular}{@{}rn{1}{3}n{1}{3}rr@{}}
\toprule
\textbf{size [\$]} &   \multicolumn{1}{@{}P{2cm}@{}}{\textbf{fractional cost ours}} &  \multicolumn{1}{@{}P{1.9cm}@{}}{\textbf{fractional cost UNI}} &\multicolumn{1}{@{}P{1.8cm}@{}}{\textbf{ratio cost UNI/ours}} \\
\midrule
    10 & 0.0000 & 7.4405E-5 &     $\infty$\\
   100 & 4.5620E-6 & 1.5155E-5 &   3.3219 \\
  1000 & 1.1037E-5 & 5.5056E-3 & 498.8522 \\
 10000 & 4.6593E-5 & 5.0553E-3 & 108.5005 \\
100000 & 8.7459E-5 & 5.0103E-3 &  57.2870 \\

\bottomrule
\end{tabular}
\caption{WBTC$\rightleftharpoons$WETH}\label{tab:costsavedBE2}
\end{subtable}
\vspace{0.2cm}

\begin{subtable}{0.49\linewidth}
\centering
\begin{tabular}{@{}rn{1}{3}n{1}{3}rr@{}}
\toprule
\textbf{size [\$]} &   \multicolumn{1}{@{}P{2cm}@{}}{\textbf{fractional cost ours}} &  \multicolumn{1}{@{}P{1.9cm}@{}}{\textbf{fractional cost UNI}} &\multicolumn{1}{@{}P{1.8cm}@{}}{\textbf{ratio cost UNI/ours}} \\
\midrule
    10 & 0.0000 & 8.3107E-5 &     $\infty$\\
   100 & 0.0000 & 1.3357E-5 &     $\infty$\\
  1000 & 7.7413E-6 & 5.5021E-3 & 710.7457 \\
 10000 & 3.2566E-5 & 5.0518E-3 & 155.1261 \\
100000 & 5.9743E-5 & 5.0067E-3 &  83.8038 \\
\bottomrule
\end{tabular}
\caption{USDC$\rightleftharpoons$USDT}\label{tab:costsavedUU2}
\end{subtable}
\hfill
\begin{subtable}{0.49\linewidth}
\centering
\begin{tabular}{@{}rn{1}{3}n{1}{3}rr@{}}
\toprule
\textbf{size [\$]} &   \multicolumn{1}{@{}P{2cm}@{}}{\textbf{fractional cost ours}} &  \multicolumn{1}{@{}P{1.9cm}@{}}{\textbf{fractional cost UNI}} &\multicolumn{1}{@{}P{1.8cm}@{}}{\textbf{ratio cost UNI/ours}} \\
\midrule
    10 & 0.0000 & 8.0255E-4 &     $\infty$\\
   100 & 5.4652E-7 & 1.3430E-4 & 245.7445 \\
  1000 & 7.5484E-5 & 5.5271E-3 &  73.2224 \\
 10000 & 9.8811E-5 & 5.0736E-3 &  51.3471 \\
100000 & 1.9092E-4 & 5.0276E-3 &  26.3332 \\
\bottomrule
\end{tabular}
\caption{WBTC$\rightleftharpoons$USDC}\label{tab:costsavedBU2}
\end{subtable}
\vspace{0.2cm}

\begin{subtable}{0.49\linewidth}
\centering
\begin{tabular}{@{}rn{1}{3}n{1}{3}rr@{}}
\toprule
\textbf{size [\$]} &   \multicolumn{1}{@{}P{2cm}@{}}{\textbf{fractional cost ours}} &  \multicolumn{1}{@{}P{1.9cm}@{}}{\textbf{fractional cost UNI}} &\multicolumn{1}{@{}P{1.8cm}@{}}{\textbf{ratio cost UNI/ours}} \\
\midrule
    10 & 0.0000 & 3.2072E-4 &     $\infty$\\
   100 & 8.9194E-6 & 7.6060E-5 &   8.5274 \\
  1000 & 5.1423E-5 & 5.5366E-3 & 107.6673 \\
 10000 & 5.0777E-5 & 5.0835E-3 & 100.1143 \\
100000 & 7.0578E-5 & 5.0313E-3 &  71.2869 \\
\bottomrule
\end{tabular}
\caption{UNI$\rightleftharpoons$USDC}\label{tab:costsavedUNU2}
\end{subtable}
\hfill
\begin{subtable}{0.49\linewidth}
\centering
\begin{tabular}{@{}rn{1}{3}n{1}{3}rr@{}}
\toprule
\textbf{size [\$]} &   \multicolumn{1}{@{}P{2cm}@{}}{\textbf{fractional cost ours}} &  \multicolumn{1}{@{}P{1.9cm}@{}}{\textbf{fractional cost UNI}} &\multicolumn{1}{@{}P{1.8cm}@{}}{\textbf{ratio cost UNI/ours}} \\
\midrule
    10 & 0.0000 & 5.7070E-5 &     $\infty$\\
   100 & 1.1519E-5 & 2.0320E-5 &   1.7641 \\
  1000 & 1.6523E-5 & 5.5144E-3 & 333.7329 \\
 10000 & 1.6352E-5 & 5.0642E-3 & 309.7068 \\
100000 & 2.6573E-5 & 5.0191E-3 & 188.8764 \\
\bottomrule
\end{tabular}
\caption{LINK$\rightleftharpoons$WETH}\label{tab:costsavedLE2}
\end{subtable}
\vspace{0.2cm}

\begin{subtable}{0.49\linewidth}
\centering
\begin{tabular}{@{}rn{1}{3}n{1}{3}rr@{}}
\toprule
\textbf{size [\$]} &   \multicolumn{1}{@{}P{2cm}@{}}{\textbf{fractional cost ours}} &  \multicolumn{1}{@{}P{1.9cm}@{}}{\textbf{fractional cost UNI}} &\multicolumn{1}{@{}P{1.8cm}@{}}{\textbf{ratio cost UNI/ours}} \\
\midrule
    10 & 0.0000 & 2.7638E-4 &     $\infty$\\
   100 & 4.5394E-6 & 4.6882E-5 &  10.3277 \\
  1000 & 3.0016E-5 & 5.5100E-3 & 183.5702 \\
 10000 & 8.1380E-5 & 5.0588E-3 &  62.1630 \\
100000 & 1.5620E-4 & 5.0136E-3 &  32.0972 \\
\bottomrule
\end{tabular}
\caption{DPI$\rightleftharpoons$WETH}\label{tab:costsavedDE2}
\end{subtable}
\hfill
\begin{subtable}{0.49\linewidth}
\centering
\begin{tabular}{@{}rn{1}{3}n{1}{3}rr@{}}
\toprule
\textbf{size [\$]} &   \multicolumn{1}{@{}P{2cm}@{}}{\textbf{fractional cost ours}} &  \multicolumn{1}{@{}P{1.9cm}@{}}{\textbf{fractional cost UNI}} &\multicolumn{1}{@{}P{1.8cm}@{}}{\textbf{ratio cost UNI/ours}} \\
\midrule
    10 & 0.0000 & 1.6934E-4 &     $\infty$\\
   100 & 4.5620E-6 & 3.1338E-5 &   6.8693 \\
  1000 & 1.8468E-5 & 5.5091E-3 & 298.3095 \\
 10000 & 2.9145E-5 & 5.0585E-3 & 173.5636 \\
100000 & 4.4243E-5 & 5.0133E-3 & 113.3132 \\
\bottomrule
\end{tabular}
\caption{KIMCHI$\rightleftharpoons$WETH}\label{tab:costsavedKE2}
\end{subtable}\vspace{-6pt}
\caption{Cost comparison when using our own algorithm to set the slippage tolerance vs. the slippage tolerance suggested by Uniswap. The fractional cost includes both the costs of being attacked as well as the costs associated with redoing the transactions. The simulation spans between blocks 11589848 and 1170984. The base fee $b$ is set to \$2.}\label{tab:costsaved2}\vspace{-14pt}

\end{table*}
We repeat the simulation from Section~\ref{sec:cost} with a smaller base fee, i.e., $b=\$2$, and present the results in Table~\ref{tab:costsaved2}. The table shows the fractional cost incurred when trades use our algorithm and the cost incurred by trades using Uniswap's auto-slippage. For the smaller base fee, our algorithm also saves significant amounts of money in comparison to the suggestions from Uniswap. Due to the lower base fee, smaller trades with Uniswap's auto-slippage become attackable. Thus, while Uniswap's auto-slippage appeared reasonable for trades of size $\$1000$ for $b=\$4$, Uniswap's auto-slippage is greatly outperformed our algorithm for trades of size $\$1000$ for $b= \$2$. 

Further, when also considering the simulation results with $b=\$8$ (cf. Table~\ref{tab:costsaved8}), we see that the general pattern stays the same. Independent of the precise base fee, our algorithm outperforms Uniswap's auto-slippage. The auto-slippage is too low for smaller trades and consequently small trades fail unnecessarily. On the other hand, for larger trades Uniswap's auto-slippage is too high and causes all trades to be attackable. It is clear that a constant auto-slippage cannot consistently perform well.

\nprounddigits{3}
\begin{table*}[htbp]
\centering
\begin{subtable}{0.49\linewidth}
\centering
\begin{tabular}{@{}rn{1}{3}n{1}{3}rr@{}}
\toprule
\textbf{size [\$]} &   \multicolumn{1}{@{}P{2cm}@{}}{\textbf{fractional cost ours}} &  \multicolumn{1}{@{}P{1.9cm}@{}}{\textbf{fractional cost UNI}} &\multicolumn{1}{@{}P{1.8cm}@{}}{\textbf{ratio cost UNI/ours}} \\
\midrule
    10 & 0.0000 & 2.2670E-4 &    $\infty$\\
   100 & 0.0000 & 3.5450E-5 &    $\infty$\\
  1000 & 7.7793E-7 & 1.6325E-5 & 20.9851 \\
 10000 & 1.0254E-4 & 5.2034E-3 & 50.7475 \\
100000 & 2.5830E-4 & 5.0236E-3 & 19.4491 \\
\bottomrule
\end{tabular}
\caption{USDC$\rightleftharpoons$WETH}\label{tab:costsavedUE8}
\end{subtable}
\hfill
\begin{subtable}{0.49\linewidth}
\centering
\begin{tabular}{@{}rn{1}{3}n{1}{3}rr@{}}
\toprule
\textbf{size [\$]} &   \multicolumn{1}{@{}P{2cm}@{}}{\textbf{fractional cost ours}} &  \multicolumn{1}{@{}P{1.9cm}@{}}{\textbf{fractional cost UNI}} &\multicolumn{1}{@{}P{1.8cm}@{}}{\textbf{ratio cost UNI/ours}} \\
\midrule
    10 & 0.0000 & 7.4405E-5 &     $\infty$\\
   100 & 0.0000 & 1.5155E-5 &     $\infty$\\
  1000 & 4.5833E-6 & 9.2297E-6 &   2.0138 \\
 10000 & 3.0479E-5 & 5.2052E-3 & 170.7785 \\
100000 & 5.0588E-5 & 5.0253E-3 &  99.3369 \\
\bottomrule
\end{tabular}
\caption{WBTC$\rightleftharpoons$WETH}\label{tab:costsavedBE8}
\end{subtable}
\vspace{0.2cm}

\begin{subtable}{0.49\linewidth}
\centering
\begin{tabular}{@{}rn{1}{3}n{1}{3}rr@{}}
\toprule
\textbf{size [\$]} &   \multicolumn{1}{@{}P{2cm}@{}}{\textbf{fractional cost ours}} &  \multicolumn{1}{@{}P{1.9cm}@{}}{\textbf{fractional cost UNI}} &\multicolumn{1}{@{}P{1.8cm}@{}}{\textbf{ratio cost UNI/ours}} \\
    10 & 0.0000 & 8.3107E-5 &     $\infty$\\
   100 & 0.0000 & 1.3357E-5 &     $\infty$\\
  1000 & 4.8165E-7 & 6.3817E-6 &  13.2498 \\
 10000 & 1.9062E-5 & 5.2017E-3 & 272.8749 \\
100000 & 3.8064E-5 & 5.0217E-3 & 131.9262 \\
\bottomrule
\end{tabular}
\caption{USDC$\rightleftharpoons$USDT}\label{tab:costsavedUU8}
\end{subtable}
\hfill
\begin{subtable}{0.49\linewidth}
\centering
\begin{tabular}{@{}rn{1}{3}n{1}{3}rr@{}}
\toprule
\textbf{size [\$]} &   \multicolumn{1}{@{}P{2cm}@{}}{\textbf{fractional cost ours}} &  \multicolumn{1}{@{}P{1.9cm}@{}}{\textbf{fractional cost UNI}} &\multicolumn{1}{@{}P{1.8cm}@{}}{\textbf{ratio cost UNI/ours}} \\
\midrule
    10 & 0.0000 & 8.0255E-4 &    $\infty$\\
   100 & 0.0000 & 1.3430E-4 &    $\infty$\\
  1000 & 2.5672E-6 & 6.7471E-5 & 26.2815 \\
 10000 & 9.3562E-5 & 5.2225E-3 & 55.8192 \\
100000 & 1.2526E-4 & 5.0425E-3 & 40.2571 \\
\bottomrule
\end{tabular}
\caption{WBTC$\rightleftharpoons$USDC}\label{tab:costsavedBU8}
\end{subtable}
\vspace{0.2cm}

\begin{subtable}{0.49\linewidth}
\centering
\begin{tabular}{@{}rn{1}{3}n{1}{3}rr@{}}
\toprule
\textbf{size [\$]} &   \multicolumn{1}{@{}P{2cm}@{}}{\textbf{fractional cost ours}} &  \multicolumn{1}{@{}P{1.9cm}@{}}{\textbf{fractional cost UNI}} &\multicolumn{1}{@{}P{1.8cm}@{}}{\textbf{ratio cost UNI/ours}} \\
\midrule
    10 & 0.0000 & 3.2072E-4 &     $\infty$\\
   100 & 0.0000 & 7.6060E-5 &     $\infty$\\
  1000 & 3.6318E-5 & 5.1469E-5 &   1.4172 \\
 10000 & 5.0695E-5 & 5.2331E-3 & 103.2274 \\
100000 & 4.6611E-5 & 5.0463E-3 & 108.2632 \\
\bottomrule
\end{tabular}
\caption{UNI$\rightleftharpoons$USDC}\label{tab:costsavedUNU8}
\end{subtable}
\hfill
\begin{subtable}{0.49\linewidth}
\centering
\begin{tabular}{@{}rn{1}{3}n{1}{3}rr@{}}
\toprule
\textbf{size [\$]} &   \multicolumn{1}{@{}P{2cm}@{}}{\textbf{fractional cost ours}} &  \multicolumn{1}{@{}P{1.9cm}@{}}{\textbf{fractional cost UNI}} &\multicolumn{1}{@{}P{1.8cm}@{}}{\textbf{ratio cost UNI/ours}} \\
\midrule
    10 & 0.0000 & 5.7070E-5 &     $\infty$\\
   100 & 0.0000 & 2.0320E-5 &     $\infty$\\
  1000 & 1.6001E-5 & 1.6644E-5 &   1.0402 \\
 10000 & 1.6417E-5 & 5.2141E-3 & 317.6139 \\
100000 & 1.6230E-5 & 5.0341E-3 & 310.1651 \\
\bottomrule
\end{tabular}
\caption{LINK$\rightleftharpoons$WETH}\label{tab:costsavedLE8}
\end{subtable}
\vspace{0.2cm}

\begin{subtable}{0.49\linewidth}
\centering
\begin{tabular}{@{}rn{1}{3}n{1}{3}rr@{}}
\toprule
\textbf{size [\$]} &   \multicolumn{1}{@{}P{2cm}@{}}{\textbf{fractional cost ours}} &  \multicolumn{1}{@{}P{1.9cm}@{}}{\textbf{fractional cost UNI}} &\multicolumn{1}{@{}P{1.8cm}@{}}{\textbf{ratio cost UNI/ours}} \\
\midrule
    10 & 0.0000 & 2.7638E-4 &    $\infty$\\
   100 & 0.0000 & 4.6882E-5 &    $\infty$\\
  1000 & 5.4281E-6 & 2.3932E-5 &  4.4089 \\
 10000 & 5.8527E-5 & 5.2084E-3 & 88.9918 \\
100000 & 1.0210E-4 & 5.0286E-3 & 49.2509 \\
\bottomrule
\end{tabular}
\caption{DPI$\rightleftharpoons$WETH}\label{tab:costsavedDE8}
\end{subtable}
\hfill
\begin{subtable}{0.49\linewidth}
\centering
\begin{tabular}{@{}rn{1}{3}n{1}{3}rr@{}}
\toprule
\textbf{size [\$]} &   \multicolumn{1}{@{}P{2cm}@{}}{\textbf{fractional cost ours}} &  \multicolumn{1}{@{}P{1.9cm}@{}}{\textbf{fractional cost UNI}} &\multicolumn{1}{@{}P{1.8cm}@{}}{\textbf{ratio cost UNI/ours}} \\
\midrule
    10 & 0.0000 & 1.6934E-4 &     $\infty$\\
   100 & 0.0000 & 3.1338E-5 &     $\infty$\\
  1000 & 5.7811E-6 & 1.7537E-5 &   3.0336 \\
 10000 & 2.3531E-5 & 5.2082E-3 & 221.3346 \\
100000 & 3.0411E-5 & 5.0283E-3 & 165.3476 \\
\bottomrule
\end{tabular}
\caption{KIMCHI$\rightleftharpoons$WETH}\label{tab:costsavedKE8}
\end{subtable}\vspace{-6pt}

\caption{Cost comparison when using our own algorithm to set the slippage tolerance vs. the slippage tolerance suggested by Uniswap. The fractional cost includes both the costs of being attacked as well as the costs associated with redoing the transactions. The simulation spans between blocks 11589848 and 1170984. The base fee $b$ is set to \$8.}\label{tab:costsaved8}\vspace{-14pt}

\end{table*}
\end{document}